\documentclass[journal,twocolumn]{IEEEtran}
\usepackage{color}
\usepackage{amsfonts}
\usepackage{amsthm}
\usepackage{verbatim}
\usepackage{multirow}
\usepackage{wrapfig}
\usepackage{subfig}
%\usepackage[nolists,nomarkers]{endfloat}
% If the IEEEtran.cls has not been installed into the LaTeX system files,
% manually specify the path to it:
% \documentclass[conference]{../sty/IEEEtran}

% some very useful LaTeX packages include:

\usepackage{cite}      % Written by Donald Arseneau

\usepackage{graphicx}  % Written by David Carlisle and Sebastian Rahtz

%\usepackage{subfigure} % Written by Steven Douglas Cochran
                        % This package makes it easy to put subfigures
                        % in your figures. i.e., "figure 1a and 1b"
                        % Docs are in "Using Imported Graphics in LaTeX2e"
                        % by Keith Reckdahl which also documents the graphicx
                        % package (see above). subfigure.sty is already
                        % installed on most LaTeX systems. The latest version
                        % and documentation can be obtained at:
                        % http://www.ctan.org/tex-archive/macros/latex/contrib/supported/subfigure/

\usepackage{amsbsy}
\usepackage{amssymb}
\usepackage[normalem]{ulem}
\usepackage{amsmath,mathrsfs}   % From the American Mathematical Society
\usepackage{cases}
\usepackage{braket}
% Other popular packages for formatting tables and equations include:

%\usepackage{prettyref}
%\newrefformat{s}{Section~\ref{#1}}

% correct bad hyphenation here
\newcommand{\NT}{n}
\newcommand{\Nb}{N}
\newcommand{\cev}[1]{\reflectbox{\ensuremath{\vec{\reflectbox{\ensuremath{#1}}}}}}

\hyphenation{op-tical net-works semi-conductor IEEEtran}

\IEEEoverridecommandlockouts
%\usepackage{setspace}
%\doublespacing
\begin{document}

% paper title
%\title{Cooperation and Information Exchange in Cellular Networks\\
%\title{Distributed Filter Optimization/Training for Multicell Multiuser MIMO Channels}
\title{Joint Bi-Directional Training of Nonlinear Precoders and Receivers in Cellular Networks}
% author names and affiliations
% use a multiple column layout for up to three different
% affiliations
\author{\authorblockN{Mingguang Xu, Dongning Guo, and Michael L. Honig\\}
  \authorblockA{Department of Electrical Engineering and Computer Science\\
Northwestern University\\
    2145 Sheridan Road, Evanston, IL 60208 USA
%\\    Email: \{mingguang-xu, dguo, mh\}@northwestern.edu
  }
    \thanks{This work was supported by
    %by the Army Research Office under grant no. W911NF-06-1-0339,
    DARPA under grant no. W911NF-07-1-0028, NSF under grant no. CCF-1018578, and a gift from Futurewei. This paper was
    presented in part at the 49th Annual Allerton
    Conference on Communication, Control, and Computing, Monticello, IL, September 2011.}}
\date{\today}
% avoiding spaces at the end of the author lines is not a problem with
% conference papers because we don't use \thanks or \IEEEmembership

% for over three affiliations, or if they all won't fit within the width
% of the page, use this alternative format:
%
%\author{\authorblockN{Michael Shell\authorrefmark{1},
%Homer Simpson\authorrefmark{2},
%James Kirk\authorrefmark{3},
%Montgomery Scott\authorrefmark{3} and
%Eldon Tyrell\authorrefmark{4}}
%\authorblockA{\authorrefmark{1}School of Electrical and Computer Engineering\\
%Georgia Institute of Technology,
%Atlanta, Georgia 30332--0250\\ Email: mshell@ece.gatech.edu}
%\authorblockA{\authorrefmark{2}Twentieth Century Fox, Springfield, USA\\
%Email: homer@thesimpsons.com}
%\authorblockA{\authorrefmark{3}Starfleet Academy, San Francisco, California 96678-2391\\
%Telephone: (800) 555--1212, Fax: (888) 555--1212}
%\authorblockA{\authorrefmark{4}Tyrell Inc., 123 Replicant Street, Los Angeles, California 90210--4321}}

% use only for invited papers
%\specialpapernotice{(Invited Paper)}

% make the title area
\maketitle

\begin{abstract}
Joint optimization of nonlinear precoders and receive filters is studied
for both the uplink and downlink in a cellular system.
For the uplink, the base transceiver station (BTS) receiver implements
successive interference cancellation, and for the downlink,
the BTS station pre-compensates for the interference with
Tomlinson-Harashima precoding (THP). Convergence of alternating
optimization of receivers and transmitters in a single cell
is established when filters are updated according
to a minimum mean squared error (MMSE)
criterion, subject to appropriate power constraints.
{\em Adaptive} algorithms are then introduced for updating
the precoders and receivers in the absence of channel state
information, assuming time-division duplex transmissions
with channel reciprocity. Instead of estimating the channels,
the filters are directly estimated according to a least
squares criterion via bi-directional training:
Uplink pilots are used to update the feedforward and feedback filters,
which are then used as interference pre-compensation
filters for downlink training of the mobile receivers.
Numerical results show that nonlinear filters can provide
substantial gains relative to linear filters with limited forward-backward iterations.
\end{abstract}

% no keywords

% For peer review papers, you can put extra information on the cover
% page as needed:
% \begin{center} \bfseries EDICS Category: 3-BBND \end{center}
%
% for peerreview papers, inserts a page break and creates the second title.
% Will be ignored for other modes.
% \IEEEpeerreviewmaketitle

\section{Introduction}

As mobile broadband wireless networks evolve, increasingly ambitious targets
have been set for spectral efficiency.  Exploiting the capabilities of multiple antennas
at the mobiles and base transceiver station (BTS) is crucial to meet those goals.
An effective approach is to use spatial precoding at the transmitters
and filtering at receivers to adapt
to the variations of the multiple-input multiple-output (MIMO) channel
and interference conditions by optimizing a metric such as the sum rate,
signal-to-interference-and-noise ratio (SINR), or mean squared error (MSE).
Two challenges need to be addressed: First, the precoders and filters
should be {\em jointly} optimized; and second, the optimization should
be carried out without {\em a priori} knowledge of the channel and interference conditions.
%The problem is especially difficult in a multiuser scenario with multiple mobile stations and possibily also multiple BTSs.

While joint optimization of precoders and receivers in cellular networks
has attracted considerable attention in recent years
(e.g., \cite{ShiBer14TSP, KomTol13TSP, HunJoh09TSP, CodTol07TSP,
ShiSch07TSP, SchShi05VTC, PanWon04TWC, SerYen04TSP, WonMur03TWC}),
%{\color{blue}**Comments from Mingguang: citations are added.**},
much of the work has assumed linear precoders for the downlink,
and has considered channel estimation
methods for estimating those precoders with unknown channels.
Here we consider joint optimization and adaptation of {\em non}linear precoders and
receivers. Specifically, we assume that
the users are successively decoded and cancelled at the BTS for the uplink, and that successive
interference pre-cancellation with Tomlinson-Harashima precoding is applied by the BTS for the downlink.
The mobiles use linear filtering and precoding.
A time-division duplex (TDD) network is assumed in order to exploit
the reciprocity between uplink and downlink channels.

The precoder and receiver filters are optimized according to a sum-MSE criterion
assuming MIMO fading channels.
This criterion is motivated by previous results showing that the minimum mean squared error (MMSE)
receiver with successive interference cancellation achieves the sum capacity
of the multiaccess channel \cite{TseVis05}, and that the
precoding structure considered (generalized decision-feedback equalizer with
interference pre-compensation) achieves the sum capacity
of the Gaussian broadcast channel \cite{YuCio04IT}.
Furthermore, the MSE criterion leads directly to adaptive approaches
for filter estimation in the presence of unknown channels and interference.

For a single cell we first show that alternating optimization
of precoders and receiver filters (for either the uplink or downlink)
converges to a local optimum.
Although this optimization must be applied separately
for the uplink and downlink to obtain the optimal precoders and receiver filters
for each direction, we observe that the filters optimized
for one direction achieve near-optimal performance for the other direction
as well when the uplink and downlink power constraints and
noise power are similar.\footnote{The MSE duality results
for broadcast and multiaccess channels in \cite{ShiSch07TSP,SchShi05VTC,HunJoh09TSP}
do not apply with different power constraints and noise powers
for the uplink and downlink,
hence cannot be used to compute the optimal precoders and receiver filters
for both uplink and downlink simultaneously.}
With this in mind we consider a simplified scheme in which
the uplink (downlink) receive filters are also used as
downlink (uplink) precoders. With this constraint
we again show that alternating optimization of the filters
at the BTS and mobiles converges to a fixed point.

In the absence of channel state information (CSI) the preceding results lead to an
iterative bi-directional training scheme in which uplink and downlink pilots
are alternately transmitted to estimate the BTS and mobile filters, respectively.
The estimated BTS forward/backward filters then nonlinearly precode both
the downlink data and pilots for interference pre-cancellation,
and the estimated mobile receiver filters precode both the uplink data and pilots.
With sufficient training, the estimated filters approach the corresponding
MMSE filters in which case training in each direction corresponds to an
MMSE update of the corresponding filters. These uplink/downlink updates
are then iterated until convergence, or a complexity constraint is reached.

This scheme extends the bi-directional training scheme
presented in \cite{ShiBer14TSP}, where the uplink/downlink pilots
are used to estimate linear filters/precoders without interference cancellation.
A key feature of these schemes is that the pilots are transmitted
{\em synchronously} in each direction (across all cells), and are used to estimate
the filters {\em directly} via a least squares criterion,
rather than directly estimating the channel. As discussed in \cite{ShiBer14TSP},
this scheme is {\em distributed} in the sense that each node (BTS or mobile)
autonomously computes its own filters from the received signal and pilots.
In contrast, prior schemes for precoding/filtering based on channel estimation
(e.g.,~\cite{LovHea08JSAC,GesHan10JSAC,WonMur03TWC,SerYen04TSP,PanWon04TWC,
SchShi05VTC,ShiSch07TSP,HunJoh09TSP})
are more suitable for centralized optimization
in which all CSI for direct- and cross-channels is passed
to a single remote location, which computes all filters and then
passes those back to the corresponding nodes.\footnote{See also the
two-way channel estimation methods presented in
\cite{Maz06Asilomar,SteSab08TWC,GomPap08ICC,WitTay08TSP,OsaMur09VTC,Zho10TWC}.
Extending those schemes to multi-cell scenarios requires the scheduling
of pilots across cells to avoid pilot pollution \cite{JosAshMar11}.}
%{\color{blue} **Comments from Mingguang: Sorry that I cannot access IEEE Xplore database, so I cannot find any accurate reference for pilot pollution.**}

Numerical results are presented that illustrate the performance of the
bi-directional training scheme as a function of the amount of training and
number of forward-backward iterations.
It is shown that the achievable rate with nonlinear filters approaches
the uplink sum capacity as the number of iterations increases.
For a fixed number of training symbols, the nonlinear filters significantly outperform linear filters when the number of iterations is limited
even though the nonlinear filters require more parameters to be estimated.
An example for a multi-cell scenario shows that more iterations are required
for the precoders and receive filters to converge than for a single cell.
Nonlinear filters again significantly outperform linear filters
with few iterations in the high-SNR regime.

Related work on joint precoder and receive filter design has been presented
in \cite{KomTol13TSP,CodTol07TSP,HanLeN13TSP,WonMur03TWC,PanWon04TWC,SerYen04TSP,ShiSch07TSP,SchShi05VTC,HunJoh09TSP,IltKim06TC}.
%**add Juntti, Tolli -- see bib list**{\color{blue} **Comments from Mingguang: Added.**}
In \cite{WonMur03TWC,PanWon04TWC}, iterative algorithms are proposed for
simultaneous diagonalization of the downlink multiuser channels to enable
interference-free spatial-division multiplexing. In \cite{SerYen04TSP}, algorithms
are presented to jointly optimize linear filters and precoders for the uplink
using a sum-MSE criterion.
%Here we show that each user should always transmit with full power.
MSE duality results for broadcast and multiaccess channels are presented
in \cite{ShiSch07TSP,SchShi05VTC,HunJoh09TSP}. Duality ensures that any uplink
minimum MSE scheme %e.g., linear MMSE reception or MMSE-decision feedback equalization,
has a downlink counterpart.
%**need to say why we can't or don't use this here****{\color{blue} **Comments from Mingguang: The following words below explain why uplink precoding and receive filters are not optimal when used for downlink.**}
%Thus the filters for downlink can be obtained by solving a transformed uplink MSE problem, and vice versa.
In \cite{ShiSch07TSP,SchShi05VTC},
%the iterative approach proposed to solve
the downlink sum-MSE minimization problem is %based on
solved by creating a ``virtual'' uplink channel, which has the
same transmit power and noise level as for the actual downlink channel.
However, those methods cannot be implemented in a distributed manner %and assume known CSI.
%While, the approach
in the setting of this paper,
% deals with both actual
where the uplink and downlink channels %, which could
may have different power constraints and noise levels. % at the BTS and different users.
In \cite{IltKim06TC}, iterative MMSE beamforming algorithms are presented
to minimize the sum power subject to a set of SINR constraints,
where the precoder and receive filter updates are done sequentially
across users. In contrast,
%here we assume that
all users update simultaneously in this paper.

\section{System Model}
Consider a single cell consisting of a BTS with
$\Nb$ antennas and $K$ users, where user $k$ has $N_k$ antennas.
The uplink channel between user $k$ and the BTS is denoted
by $\boldsymbol{H}_k$ of size $\Nb\!\times\!N_k$, each entry of which
is a unit-variance circularly symmetric complex Gaussian (CSCG)
random variable. Each user is restricted to one data stream,
although the algorithms to be presented are easily
extended to the scenario with multiple data streams per user.
The received signal vector at the BTS is
\begin{equation}
\vec{\boldsymbol{y}}=\sum_{k=1}^K\boldsymbol{H}_k\boldsymbol{v}_k\vec{x}_k+\vec{\boldsymbol{n}},\label{upsignal}
\end{equation}
where $\vec{x}_k$ is the data symbol from user $k$
and $\mathbb{E}\big[|\vec{x}_k|^2\big]\!=\!1$,
$\boldsymbol{v}_k$ is the $N_k \times 1$ precoder
with $\|\boldsymbol{v}_k\|^2\!\leq\!P_k$, where $P_k$ is
the power constraint for user $k$,
and $\vec{\boldsymbol{n}}$ is the CSCG noise vector
each entry of which has zero mean and variance $\sigma^2$.

%\subsection{Downlink}
Similarly, assuming TDD with channel reciprocity,
the downlink received signal at user $k$ can be written as
\begin{equation}
\cev{\boldsymbol{y}_k}=\boldsymbol{H}_k^\dagger\bigg(
\sum_{i=1}^K\boldsymbol{t}_i\cev{x_i}\bigg)+
\cev{\boldsymbol{n}}_k,\label{eq:downrecSig}
\end{equation}
where $\cev{x_i}$ is the data symbol for user $i$,
$\boldsymbol{t}_i$ is the corresponding transmit beam,
and $\cev{\boldsymbol{n}}_k$ denotes the CSCG noise vector with $k$th entry having
variance $\sigma^2_k$. The filters satisfy the BTS power constraint
$\sum_{i=1}^K\!\|\boldsymbol{t}_i\|^2\!\leq\!P$. \begin{comment}
At user $k$, the received signal is processed by a linear receiver $\boldsymbol{r}_k$ to yield an estimate $\widehat{x}_k\!=\!\boldsymbol{r}_k^\dagger\overleftarrow{\boldsymbol{y}}_k$. The corresponding MSE of user $k$ is
\begin{align*}
\varepsilon_{\text{dl}}(k)\!=\!\sum_{i=1}^{K}\big|\boldsymbol{r}_k^\dagger\boldsymbol{H}_k^\dagger\boldsymbol{t}_i\big|^2\!-
\!2\Re\big\{\boldsymbol{r}_k^\dagger\boldsymbol{H}_k^\dagger\boldsymbol{t}_k\big\}\!+\!1\!+\!\sigma_k^2\|\boldsymbol{r}_k\|^2.
\end{align*}
\end{comment}

\section{Joint Optimization of Precoders and Receivers}
%\section{Optimization with Known Channels}
\label{sec:linearup}
We first discuss joint optimization of transmit precoders
and receive filters at a central controller with perfect CSI.
Adaptive versions with training are subsequently
discussed in Section~\ref{s:training}.
%  Although the optimization problem is in general non-convex, locally optimal precoders and filters can be iteratively computed efficiently.
%**Sections \ref{up_lin}-\ref{up_nonlin} discuss
%the uplink with linear and nonlinear filtering, respectively,
%and the uplink with nonlinear filtering,
%the downlink with linear filtering,
%and the downlink with nonlinear filtering, respectively.
%The corresponding iterative algorithms are discussed in Section IV. A distributive and adaptive algorithm where the communicating parties do not have CSI {\em a priori} shall be studied in Section~\ref{s:training}.**

\subsection{Uplink Filters}
\label{s:linear}
First consider the case where the received signal in (\ref{upsignal})
is processed by linear filters
$\boldsymbol{G}\!=\!\big[\boldsymbol{g}_1,\ldots,\boldsymbol{g}_K\big]$,
where $\boldsymbol{g}_k$ denotes the receive filter for user $k$.
The estimated signal for user $k$ is given by
$\hat{x}_k\!=\!\boldsymbol{g}_k^\dagger\vec{\boldsymbol{y}}$
and the corresponding MSE for user $k$ is
%\begin{align}
%\varepsilon_{\text{up}}(k)&\!=\!\mathbb{E}\big[|\vec{x}_k-\hat{x}_k|^2\big]\notag\\
%%&=\mathbb{E}\big[|x_k-\overline{\boldsymbol{g}}_k^\dagger\boldsymbol{y}|^2]\\
%%&\!=\!\sum_{i=1}^{K}\big|\boldsymbol{g}_k^\dagger\boldsymbol{H}_i\boldsymbol{v}_i\big|^2\!-\!2\Re\big\{\boldsymbol{g}_k^\dagger\boldsymbol{H}_k\boldsymbol{v}_k\big\}\!+\!1\!+\!\sigma^2\|\boldsymbol{g}_k\|^2,\label{upMSE}\\
%&\!=\!1\!+\!\sigma^2\|\boldsymbol{g}_k\|^2\!-\!2\Re\big\{\boldsymbol{g}_k^\dagger\boldsymbol{H}_k\boldsymbol{v}_k\big\}\!+\!\sum_{i=1}^{K}\big|\boldsymbol{g}_k^\dagger\boldsymbol{H}_i\boldsymbol{v}_i\big|^2,\label{upMSE}
%\end{align}
\begin{align}
\vec{\varepsilon}(k)&\!=\!\mathbb{E}\big[|\vec{x}_k-\hat{x}_k|^2\big]\\
%&=\mathbb{E}\big[|x_k-\overline{\boldsymbol{g}}_k^\dagger\boldsymbol{y}|^2]\\
%&\!=\!\sum_{i=1}^{K}\big|\boldsymbol{g}_k^\dagger\boldsymbol{H}_i\boldsymbol{v}_i\big|^2\!-\!2\Re\big\{\boldsymbol{g}_k^\dagger\boldsymbol{H}_k\boldsymbol{v}_k\big\}\!+\!1\!+\!\sigma^2\|\boldsymbol{g}_k\|^2,\label{upMSE}\\
&\!=\!1\!+\!\sigma^2\|\boldsymbol{g}_k\|^2\!-\!2\Re\big\{\boldsymbol{g}_k^\dagger\boldsymbol{H}_k\boldsymbol{v}_k\big\}\!+\!\sum_{i=1}^{K}\big|\boldsymbol{g}_k^\dagger\boldsymbol{H}_i\boldsymbol{v}_i\big|^2,\label{upMSE}
\end{align}
where $\Re\{\cdot\}$ takes the real part of a complex number.
The optimization problem is stated as:
%\begin{subequations}
%\label{upproblem}
%\begin{align}
%&\underset{\{\boldsymbol{g}_k,\boldsymbol{v}_k\}}{\operatorname{\text{minimize}}}~~\sum_{k=1}^K\varepsilon_{\text{up}}(k)\\
%&\text{subject to}~~\|\boldsymbol{v}_k\|^2\leq P_k,~k=1,\ldots,K.\label{powercon}
%\end{align}
%\end{subequations}
\begin{subequations}
\label{upproblem}
\begin{align}
&\underset{\{\boldsymbol{g}_k,\boldsymbol{v}_k\}}{\operatorname{\text{minimize}}}~~\sum_{k=1}^K\vec{\varepsilon}(k)\\
&\text{subject to}~~\|\boldsymbol{v}_k\|^2\leq P_k,~k=1,\ldots,K.\label{powercon}
\end{align}
\end{subequations}

%With linear filters, if the transmit power at one user is increased, then the achievable MSE of this user at the BTS will decrease and at the same time this will also lead to an increase of the achievable MSE's of the other users since they are affected by stronger interference.
%Nonetheless, it can be proved that increasing the transmit power at each user decreases the sum-MSE in the uplink:
%Nonetheless, as is proved in the Appendix, increasing the transmit power at each user decreases the sum-MSE in the uplink:

\newtheorem{prop}{Proposition}
\begin{prop}
\label{prop:fullpow}
The sum MMSE is minimized when all users transmit with full power.
\end{prop}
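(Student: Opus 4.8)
The plan is to show that scaling up any user's precoder to use full power can only decrease (weakly) the sum MSE, so at an optimum all power constraints are active. First I would fix all receive filters $\boldsymbol{g}_k$ and all precoders $\boldsymbol{v}_i$ for $i\neq k$, and view the sum MSE as a function of $\boldsymbol{v}_k$ alone. Write $\boldsymbol{v}_k = \alpha\,\boldsymbol{u}_k$ where $\|\boldsymbol{u}_k\|^2 = P_k$ is the full-power direction and $\alpha\in(0,1]$ is a scaling (or more simply, just consider perturbing the magnitude of $\boldsymbol{v}_k$). Collecting the terms of $\sum_k \vec{\varepsilon}(k)$ that depend on $\boldsymbol{v}_k$, only two kinds of terms appear: the cross terms $-2\Re\{\boldsymbol{g}_k^\dagger\boldsymbol{H}_k\boldsymbol{v}_k\}$ (from user $k$'s own MSE) and the interference terms $\sum_{j=1}^K |\boldsymbol{g}_j^\dagger\boldsymbol{H}_k\boldsymbol{v}_k|^2$ (one quadratic contribution to every user $j$'s MSE, including $j=k$).

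The key step is the observation that the quadratic-in-$\boldsymbol{v}_k$ part, $\sum_j |\boldsymbol{g}_j^\dagger \boldsymbol{H}_k \boldsymbol{v}_k|^2 = \boldsymbol{v}_k^\dagger \boldsymbol{A}_k \boldsymbol{v}_k$ with $\boldsymbol{A}_k = \boldsymbol{H}_k^\dagger\big(\sum_j \boldsymbol{g}_j\boldsymbol{g}_j^\dagger\big)\boldsymbol{H}_k \succeq 0$, grows with $\|\boldsymbol{v}_k\|$, while the linear part $-2\Re\{\boldsymbol{g}_k^\dagger\boldsymbol{H}_k\boldsymbol{v}_k\}$ is minimized by aligning the phase of $\boldsymbol{v}_k$ against $\boldsymbol{H}_k^\dagger\boldsymbol{g}_k$ and scaling up. The cleanest argument: suppose $(\{\boldsymbol{g}_k\},\{\boldsymbol{v}_k\})$ is optimal but $\|\boldsymbol{v}_k\|^2 = P_k' < P_k$ for some $k$. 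Replace $\boldsymbol{v}_k$ by $\tilde{\boldsymbol{v}}_k = \sqrt{P_k/P_k'}\,e^{j\theta}\boldsymbol{v}_k$, a feasible choice. A direct computation shows that this replacement rescales the quadratic term by $P_k/P_k' > 1$ and rescales the linear term's magnitude by $\sqrt{P_k/P_k'}$, and with $\theta$ chosen so that $\Re\{\boldsymbol{g}_k^\dagger\boldsymbol{H}_k\tilde{\boldsymbol{v}}_k\}\ge 0$ we get a strict decrease unless the original $\boldsymbol{v}_k$ already had $\boldsymbol{g}_k^\dagger\boldsymbol{H}_k\boldsymbol{v}_k = 0$. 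To handle that degenerate case, note that if $\boldsymbol{g}_k^\dagger\boldsymbol{H}_k\boldsymbol{v}_k = 0$ then user $k$'s precoder contributes only the nonnegative interference term $\boldsymbol{v}_k^\dagger\boldsymbol{A}_k\boldsymbol{v}_k\ge 0$ and nothing useful, so setting $\boldsymbol{v}_k = 0$ and then reoptimizing (or, more carefully, first rotating $\boldsymbol{v}_k$ to a direction with nonzero inner product, which exists unless $\boldsymbol{H}_k^\dagger\boldsymbol{g}_k=0$) weakly improves the objective; hence there is always an optimal solution with all constraints tight, and generically the optimum is unique with full power.

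Alternatively, and perhaps more transparently, I would argue coordinate-wise that for fixed $\{\boldsymbol{g}_j\}$ the sum MSE as a function of $t = \|\boldsymbol{v}_k\|$ along the optimal direction is a quadratic $a t^2 - 2bt + c$ with $a = \boldsymbol{u}_k^\dagger\boldsymbol{A}_k\boldsymbol{u}_k \ge 0$ and $b = |\boldsymbol{g}_k^\dagger\boldsymbol{H}_k\boldsymbol{u}_k| \ge 0$ (where $\boldsymbol{u}_k$ is the unit-norm optimal direction); its unconstrained minimizer is $t^\star = b/a$, and one shows $t^\star \ge \sqrt{P_k}$ because the MMSE receive filter $\boldsymbol{g}_k$, being the optimal response to the current precoders, makes $b$ large enough — concretely, plugging in the MMSE expression for $\boldsymbol{g}_k$ one checks $b^2 \ge a P_k$, forcing the constrained optimum to the boundary $t = \sqrt{P_k}$. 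The main obstacle I anticipate is precisely this last inequality: one needs to use the fact that the receive filters are themselves optimal (MMSE) — without that, a poorly chosen $\boldsymbol{g}_k$ could make $b$ small and the unconstrained optimum interior — so the proof must invoke the joint-optimality (stationarity in $\boldsymbol{g}_k$) of the pair, not merely treat $\boldsymbol{v}_k$ in isolation. Handling the phase/direction degeneracy and the $\boldsymbol{H}_k^\dagger\boldsymbol{g}_k = 0$ edge case cleanly is the only other fiddly point.
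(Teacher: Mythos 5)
Your overall strategy---show that at a joint optimum with a slack power constraint one can scale $\boldsymbol{v}_k$ up to full power and strictly improve---is sound in outline, but the step that carries all the weight is false as stated. With the receive filters frozen, the part of the sum MSE that depends on $\boldsymbol{v}_k=t\boldsymbol{u}_k$ is $at^2-2bt$ with $a=\sum_j|\boldsymbol{g}_j^\dagger\boldsymbol{H}_k\boldsymbol{u}_k|^2$ and $b=|\boldsymbol{g}_k^\dagger\boldsymbol{H}_k\boldsymbol{u}_k|$, and you need its unconstrained minimizer $t^\star=b/a$ to lie at or beyond $\sqrt{P_k}$ (your ``$b^2\ge aP_k$''). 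Plugging in the MMSE filters gives only $t^\star=t_0\,q/(q-\sigma^2 p)$, where $t_0$ is the \emph{current} norm of $\boldsymbol{v}_k$, $\boldsymbol{h}_k=\boldsymbol{H}_k\boldsymbol{u}_k$, $q=\boldsymbol{h}_k^\dagger\boldsymbol{R}^{-1}\boldsymbol{h}_k$, and $p=\|\boldsymbol{R}^{-1}\boldsymbol{h}_k\|^2$; this exceeds $t_0$, but by a factor that shrinks to $1$ at high SNR, not by enough to reach $\sqrt{P_k}$. A scalar counterexample: one user, $h=1$, $\sigma^2=0.01$, current power $1$, cap $P_k=100$; then $g=1/1.01$, $t^\star=1.01\ll 10$, and jumping to full power with $g$ frozen drives the MSE from about $0.01$ up to about $79$. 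The proposition is still true there because the MMSE with the filter re-optimized drops to about $10^{-4}$, but your fixed-filter jump does not prove it. The same defect sinks your first variant (``rescales the quadratic by $P_k/P_k'$ \dots\ we get a strict decrease''), since the quadratic growth can dominate the linear gain. What is true, and what would rescue your approach, is the infinitesimal version: at $t=t_0$ the derivative of the fixed-filter MSE is $2(at_0-b)=-2\sigma^2 t_0 p\le 0$, and by the envelope theorem this equals the derivative of the MMSE itself, so the MMSE is non-increasing in $t$ at every operating point and hence is minimized on the boundary $t=\sqrt{P_k}$.

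The paper sidesteps the fixed-filter issue entirely by eliminating the receive filters first: with $\boldsymbol{H}=[\boldsymbol{H}_1\boldsymbol{v}_1,\dots,\boldsymbol{H}_K\boldsymbol{v}_K]$ the sum MMSE equals $K-\sum_i\lambda_i/(\lambda_i+\sigma^2)$, where the $\lambda_i$ are the eigenvalues of $\boldsymbol{H}\boldsymbol{H}^\dagger$; scaling any user's power up adds a positive semidefinite matrix to $\boldsymbol{H}\boldsymbol{H}^\dagger$, so every eigenvalue weakly increases (Weyl monotonicity, \cite[Cor.~4.3.3]{HorCha95}), and each term $\lambda_i/(\lambda_i+\sigma^2)$ weakly increases with $\lambda_i$. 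You should either adopt that spectral argument or replace your one-shot power jump with the derivative/envelope argument above; as written, the proof does not go through.
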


The proof is given in the appendix.
Hence the decrease in sum MSE for a particular user $k$ due to an
increase in power $P_k$ dominates the increase in sum MSE
due to the increase in interference at neighboring
receivers. We will assume that each user transmits at the maximum power
so that the constraints (\ref{powercon}) are binding.

%{\color{blue}** Comments from Mingguang: This comment should be modified, because the necessary condition for achieving the maximum sum rate is that each user transmits with full power.**}

%This is due to the fact that as the power $P_k \to 0$,
%the associated MSE for that user goes to $E[|x_k|^2]$.
%\sout{(This does not apply to the sum rate objective.)}{\color{red} **(
\begin{figure*}
\center
\includegraphics[width=0.75\textwidth]{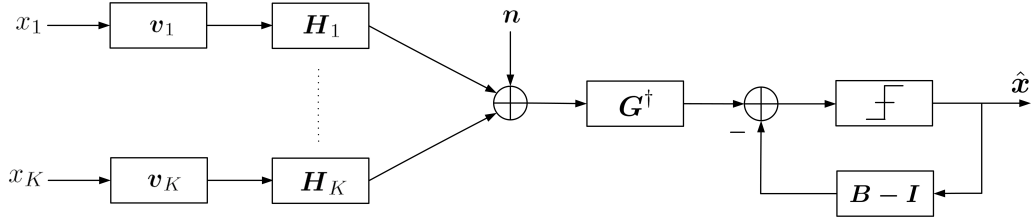}
\caption{\small Decision-feedback detector in the uplink.}
 \label{uplink}
\end{figure*}

The Lagrangian for problem~(\ref{upproblem}) is
%\begin{equation}
%%\begin{split}
%L(\boldsymbol{g}_1,\ldots,\boldsymbol{g}_K, \boldsymbol{v}_1,\ldots,\boldsymbol{v}_K)
%=\sum_{k=1}^K\varepsilon_{\text{up}}(k)+\sum_{k=1}^K\mu_k\big(\|\boldsymbol{v}_k\|^2-P_k\big),\label{Lagrangian}
%%\end{split}
%\end{equation}
\begin{equation}
%\begin{split}
L(\boldsymbol{g}_1,\ldots,\boldsymbol{g}_K, \boldsymbol{v}_1,\ldots,\boldsymbol{v}_K)
=\sum_{k=1}^K\vec{\varepsilon}(k)+\sum_{k=1}^K\mu_k\big(\|\boldsymbol{v}_k\|^2-P_k\big),\label{Lagrangian}
%\end{split}
\end{equation}
where $\mu_k$ is the Lagrange multiplier associated with the power constraint for user $k$.
Hence the optimal solution to problem (\ref{upproblem}) must satisfy:
\begin{align}
%\begin{subnumcases}
%{\label{eq:kktup}}
%&\!\!\!\!\!\!\!\!\!\!\!$
\boldsymbol{g}_k & =\bigg(\displaystyle{\sum_{i=1}^{K}}
\boldsymbol{H}_i\boldsymbol{v}_i\boldsymbol{v}_i^\dagger\boldsymbol{H}_i^\dagger
+\sigma^2\boldsymbol{I}\bigg)^{-1}\boldsymbol{H}_k\boldsymbol{v}_k \label{eq:upLnFilter1}\\
%&\!\!\!\!\!\!\!\!\!\!\!$
\boldsymbol{v}_k & =\bigg(\displaystyle{\sum_{i=1}^{K}}
\boldsymbol{H}_k^\dagger\boldsymbol{g}_i\boldsymbol{g}_i^\dagger\boldsymbol{H}_k
+\mu_k\boldsymbol{I}\bigg)^{-1}\boldsymbol{H}_k^\dagger\boldsymbol{g}_k
\label{eq:upLnFilter2}
%\end{subnumcases}
\end{align}
for $k=1,\dots,K$, where $\mu_k$ in (\ref{eq:upLnFilter2})
is chosen to satisfy (\ref{powercon}).

%\subsection{The Uplink: Nonlinear Filtering}
%\label{sec:nonlinear}
%In this subsection, the scheme is extended to nonlinear precoders and filters.
The nonlinear successive cancellation filter at the BTS is the
decision-feedback detector (DFD) shown in Fig.~\ref{uplink}.
%joint optimization of precoders and filters to the nonlinear case with
% in the uplink.
%and interference pre-compensation filter---Tomlinson-Harashima precoding (THP) in the downlink.
% With a DFD at the BTS,
The received signal is input to the linear feedforward filter $\boldsymbol{G}$,
and the feedback filter $\boldsymbol{B}-\boldsymbol{I}$ implements
the successive interference cancellation.  Assuming the decoding order
is from user $1$ to user $K$, the feedback matrix $\boldsymbol{B}$
is lower triangular with diagonal elements equal to one.
%\begin{equation}\notag
%\boldsymbol{B}-\boldsymbol{I}=
%\begin{bmatrix}
%0 & \hdotsfor{3} & 0\\
%b_{21} & 0 & \hdotsfor{2} & 0\\
%b_{31}& b_{32} & 0 & \hdotsfor{1} & 0\\
%\hdotsfor{5}\\
%b_{K1} &
%b_{K2} &
%b_{K3} & \dots & 0
%\end{bmatrix}
%\end{equation}
%where $b_{ki}$ denotes the $(k,i)$-th element of $\boldsymbol{B}$.
The estimated symbol for user $k$ is then
\begin{equation}
\begin{split}
\hat{x}_k=\boldsymbol{g}_k^\dagger\boldsymbol{H}_k\boldsymbol{v}_k
&\vec{x}_k+\boldsymbol{g}_k^\dagger\bigg(\sum_{i=k+1}^{K}\boldsymbol{H}_i\boldsymbol{v}_i\vec{x}_i+\vec{\boldsymbol{n}}\bigg)\\
&+\sum_{i=1}^{k-1}\bigg(\boldsymbol{g}_k^\dagger\boldsymbol{H}_i\boldsymbol{v}_i\vec{x}_i-b_{ki}Q(\hat{x}_i)\bigg),\label{dfesignal}
\end{split}
\end{equation}
where $b_{ki}$ denotes the ($k$,$i$)-th element of $\boldsymbol{B}$
and $Q(\cdot)$ denotes the slicer function in the forward loop.
For user $k$, if decoding of the previous symbols is correct,
i.e., $Q(\hat{x}_i)\!=\!\vec{x}_i,~i\!=\!1,\ldots,k\!-\!1$,
and $b_{ki}=\boldsymbol{g}_k^\dagger\boldsymbol{H}_i\boldsymbol{v}_i$,
then the interference from users 1 to user $k\!-\!1$ can be removed.
The achievable MSE for user $k$ is then
\begin{align}
\vec{\varepsilon}_{\text{dfd}}(k)\!=\!1\!+\!\sigma^2\|\boldsymbol{g}_k\|^2\!\!-\!\!2\Re\big\{\boldsymbol{g}_k^\dagger\boldsymbol{H}_k\boldsymbol{v}_k\big\}\!\!+\!\!\sum_{i=k}^{K}\big|\boldsymbol{g}_k^\dagger\boldsymbol{H}_i\boldsymbol{v}_i\big|^2\!.
\end{align}
%where the summation index is from $k$ to $K$, since user $k$ receives no interference from users $1,\ldots,k\!-\!1$.
The problem becomes
%We aim to choose the precoders, the feedforward filters, and the feedback filters jointly to
%minimize the sum-MSE: % in the uplink, i.e.,
\begin{subequations}
\label{dfeproblem}
\begin{align}
&\underset{\{\boldsymbol{g}_k,\boldsymbol{v}_k\}}{\operatorname{\text{minimize}}}~~
\sum_{k=1}^K\vec{\varepsilon}_{\text{dfd}}(k)\\
&\text{subject to}~~\|\boldsymbol{v}_k\|^2=P_k,~k=1,\ldots,K.\label{con:fullpow}
\end{align}
\end{subequations}
In this scenario it may not be optimal (in the sum-MSE sense)
for all users to transmit with full power.
We will nevertheless assume that the power constraints are binding
since this simplifies the problem, and
%Still, it is reasonable to enforce full power constraint in~\eqref{con:fullpow} since
because transmitting at full power is known to achieve the uplink sum capacity.

%Since the feedback matrix is determined by the precoders, the channels, and the feedforward filters, it is not included in (\ref{dfeproblem}).

Writing the Lagrangian as before, the corresponding first-order conditions are
%for problem (\ref{dfeproblem}) is
%\begin{equation}
%\begin{split}
%L_{\text{dfd}}(\boldsymbol{g}_1,\ldots,&\boldsymbol{g}_K,\boldsymbol{v}_1,\ldots,\boldsymbol{v}_K)\\
%&=\sum_{k=1}^K\varepsilon_{\text{dfd}}(k)+\sum_{k=1}^K\mu_k\big(\|\boldsymbol{v}_k\|^2-P_k\big)\label{dfeLagrangian}
%\end{split}
%\end{equation}
%and the solution must satisfy
%\begin{subnumcases}{}
\begin{align}
%&\!\!\!\!\!\!\!\!\!\!\!$
\boldsymbol{g}_k & =\bigg(
 \displaystyle{\sum_{i=k}^{K}}
\boldsymbol{H}_i\boldsymbol{v}_i\boldsymbol{v}_i^\dagger\boldsymbol{H}_i^\dagger
+\sigma^2\boldsymbol{I}\bigg)^{-1}\boldsymbol{H}_k\boldsymbol{v}_k\label{eq:upNlnFilter1}\\
%&\!\!\!\!\!\!\!\!\!\!\!$
\boldsymbol{v}_k & =\bigg(\displaystyle{\sum_{i=1}^{k}}
\boldsymbol{H}_k^\dagger\boldsymbol{g}_i\boldsymbol{g}_i^\dagger\boldsymbol{H}_k
+\mu_k\boldsymbol{I}\bigg)^{-1}\boldsymbol{H}_k^\dagger\boldsymbol{g}_k\label{eq:upNlnFilter2}
%\end{subnumcases}
\end{align}
for $k=1,\dots,K$, where the Lagrange multipliers $\mu_k$'s are chosen
to enforce the power constraints (\ref{con:fullpow}).
The summation indices differ from (\ref{eq:upLnFilter1})
and (\ref{eq:upLnFilter2}) due to successive cancellation.
%since part of the interference is cancelled.

\begin{comment}
For the iterative approaches of computing (\ref{eq:upLnFilter1})--(\ref{eq:upLnFilter2}) in the linear case and (\ref{eq:upNlnFilter1})--(\ref{eq:upNlnFilter2}) in the nonlinear case, the channel matrices and power constraints must be available at a central node. One possible way to accomplish this in an FDD system is: the mobiles may send pilots for the BTS to measure the matrices $\boldsymbol{H}_k$'s, then the BTS computes the optimal beams, quantizes them, and sends back the mobiles. In a TDD system, this can be simplified, however, with a channel estimation-based approach, centralized processing based on CSI exchange is necessary.
\end{comment}

\subsection{Downlink Filters} % Sum-MSE Minimization}
\label{s:down}
\begin{comment}
\textcolor[rgb]{1.00,0.00,0.00}{OPTION 1: For both linear and nonlinear precoders and filters, similar sum-MSE minimization problems can be formulated for the downlink channel. Here we omit the details. Strictly speaking, the corresponding optimal precoders and receive filters are different from the uplink case especially with disparate power constraints and noise levels at the BTS and each user. To obtain the corresponding precoders/receive filters, two sum-MSE problems have to be separately formulated and solved.}
OPTION 2
\end{comment}
\begin{comment}
With the same system configuration, similar sum-MSE minimization problems can be formulated for the downlink channel. Taking the linear case as an example, if the BTS chooses to use $\boldsymbol{t}_k$'s as the transmit filters, then the received signal at user $k$ can be written as
\begin{equation}
\overleftarrow{\boldsymbol{y}}_k=\boldsymbol{H}_k^\dagger\bigg(\sum_{i=1}^K\boldsymbol{t}_i\overleftarrow{x}_i\bigg)+\overleftarrow{\boldsymbol{n}}_k,
\end{equation}
where $\overleftarrow{\boldsymbol{n}}_k$ denotes the CSCG noise vector each entry of which has zero mean and variance $\sigma^2_k$, and the filters satisfy the power constraint at the BTS: $\sum_{i=1}^K\!\|\boldsymbol{t}_i\|^2\!=\!P$.
\end{comment}
The expressions for the downlink filters can be obtained in a
similar fashion as for the uplink.
%The techniques for uplink can also be used to optimize the precoders and filters for downlink data communication.
The uplink and downlink problems are different because the SNRs
for the uplink and downlink channels are different in general,
and because the power constraints are on individual users in the uplink,
while the constraint is on the sum over all users in the downlink.
%**Isn't this true for sum capacity as well? yet duality applies there -- need to check this.**{\color{blue} **Comments from Mingguang: I am not sure about this, but do we have to address the sum capacity issue here?**}

For the downlink, the estimated symbol for user $k$ is
$\hat{x}_k\!=\!\boldsymbol{r}_k^\dagger\cev{\boldsymbol{y}_k}$
where $\boldsymbol{r}_k$ is the receiver filter.
%received signal (\ref{eq:downrecSig}) for user $k$
%is processed by a linear receiver $\boldsymbol{r}_k$ to yield an estimate
The corresponding MSE is
%\begin{align}
%\varepsilon_{\text{down}}(k)\!=\!1\!+\!\sigma_k^2\|\boldsymbol{r}_k\|^2\!-\!2\Re\big\{\boldsymbol{r}_k^\dagger\boldsymbol{H}_k^\dagger\boldsymbol{t}_k\big\}\!+\!\sum_{i=1}^{K}\big|\boldsymbol{r}_k^\dagger\boldsymbol{H}_k^\dagger\boldsymbol{t}_i\big|^2.
%\end{align}
\begin{align}
\cev{\varepsilon}(k)\!=\!1\!+\!\sigma_k^2\|\boldsymbol{r}_k\|^2\!-\!2\Re\big\{\boldsymbol{r}_k^\dagger\boldsymbol{H}_k^\dagger\boldsymbol{t}_k\big\}\!+\!\sum_{i=1}^{K}\big|\boldsymbol{r}_k^\dagger\boldsymbol{H}_k^\dagger\boldsymbol{t}_i\big|^2.
\end{align}
The downlink sum-MSE minimization problem is
\begin{subequations}
\label{dproblem}
%\begin{align}
%&\underset{\{\boldsymbol{r}_k,\boldsymbol{t}_k\}}{\operatorname{\text{minimize}}}~~\sum_{k=1}^K\varepsilon_{\text{down}}(k)\\
%&\text{subject to}~~\sum_{k=1}^K\|\boldsymbol{t}_k\|^2=P.\label{con:downfullpow}
%\end{align}
%\end{subequations}
\begin{align}
&\underset{\{\boldsymbol{r}_k,\boldsymbol{t}_k\}}{\operatorname{\text{minimize}}}~~\sum_{k=1}^K\cev{\varepsilon}(k)\\
&\text{subject to}~~\sum_{k=1}^K\|\boldsymbol{t}_k\|^2=P.\label{con:downfullpow}
\end{align}
\end{subequations}
The corresponding optimal filters satisfy:
\begin{align}
%{\label{eq:kktdown}}
%&\!\!\!\!\!\!\!\!\!\!\!$
\boldsymbol{t}_k & =\bigg(\displaystyle{\sum_{i=1}^{K}}
\boldsymbol{H}_i\boldsymbol{r}_i\boldsymbol{r}_i^\dagger\boldsymbol{H}_i^\dagger
+\mu\boldsymbol{I}\bigg)^{-1}\boldsymbol{H}_k\boldsymbol{r}_k \label{eq:downLnFilter2}\\
%&\!\!\!\!\!\!\!\!\!\!\!$
\boldsymbol{r}_k & =\bigg(\displaystyle{\sum_{i=1}^{K}}
\boldsymbol{H}_k^\dagger\boldsymbol{t}_i\boldsymbol{t}_i^\dagger\boldsymbol{H}_k
+\sigma_k^2\boldsymbol{I}\bigg)^{-1}
\boldsymbol{H}_k^\dagger\boldsymbol{t}_k \label{eq:downLnFilter1}
\end{align}
for $k=1,\dots,K$, where the Lagrange multiplier $\mu$
is chosen to satisfy \eqref{con:downfullpow}.

%\subsection{The Downlink: Nonlinear Filtering}
With nonlinear filtering at the BTS \cite{Costa83IT,WeiSte06IT},
i.e., dirty paper coding,
assuming the encoding order is from user $K$ to user $1$,
the signal received by user $k$ can be written as
\begin{equation}
\cev{\boldsymbol{y}_k}=\boldsymbol{H}_k^\dagger\bigg(\sum_{i=1}^k\boldsymbol{t}_i\cev{x_i}\bigg)+\cev{\boldsymbol{n}}_k.
\label{eq:downrecSigNln}
\end{equation}
%where $\overleftarrow{\boldsymbol{n}}_k$ denotes the CSCG noise vector each entry of which has zero mean and variance $\sigma^2_k$, and the filters satisfy the power constraint at the BTS: $\sum_{i=1}^K\!\|\boldsymbol{t}_i\|^2\!=\!P$.
The corresponding MSE for user $k$ at the output of the
linear receiver $\boldsymbol{r}_k$ is then
\begin{align}
\cev{\varepsilon}_{\text{ipc}}(k)\!=\!1\!+\!\sigma_k^2\|\boldsymbol{r}_k\|^2\!\!-\!\!2\Re\big\{\boldsymbol{r}_k^\dagger\boldsymbol{H}_k^\dagger\boldsymbol{t}_k\big\}\!+\!\sum_{i=1}^{k}\big|\boldsymbol{r}_k^\dagger\boldsymbol{H}_k^\dagger\boldsymbol{t}_i\big|^2.
\end{align}
The downlink sum-MSE minimization problem is
\begin{subequations}
\label{dproblem_ipc}
\begin{align}
&\underset{\{\boldsymbol{r}_k,\boldsymbol{t}_k\}}{\operatorname{\text{minimize}}}~~\sum_{k=1}^K\cev{\varepsilon}_{\text{ipc}}(k)\\
&\text{subject to}~~\sum_{k=1}^K\|\boldsymbol{t}_k\|^2=P.\label{con:downfullpow_ipc}
\end{align}
\end{subequations}
The corresponding optimal filters satisfy:
\begin{align}
%{\label{eq:kktdown_ipc}}
%&&\!\!\!\!\!\!\!\!\!\!\!$
\boldsymbol{t}_k & =\bigg(\displaystyle{\sum_{i=k}^{K}}\boldsymbol{H}_i\boldsymbol{r}_i\boldsymbol{r}_i^\dagger\boldsymbol{H}_i^\dagger
+\mu\boldsymbol{I}\bigg)^{-1}\boldsymbol{H}_k\boldsymbol{r}_k \label{eq:downnLnFilter2}\\
%&\!\!\!\!\!\!\!\!\!\!\!$
\boldsymbol{r}_k & =\bigg(\displaystyle{\sum_{i=1}^{k}}\boldsymbol{H}_k^\dagger\boldsymbol{t}_i\boldsymbol{t}_i^\dagger\boldsymbol{H}_k
+\sigma_k^2\boldsymbol{I}\bigg)^{-1}\boldsymbol{H}_k^\dagger\boldsymbol{t}_k \label{eq:downnLnFilter1}
\end{align}
for $k=1,\dots,K$, where the Lagrange multiplier $\mu$
is chosen to satisfy \eqref{con:downfullpow}.

\section{Iterative Computation}
\subsection{Separate Uplink/Downlink Optimization}
\label{Sec:iterativeSeparate}
Focusing on the uplink the precoders and filters
in (\ref{eq:upLnFilter1})--(\ref{eq:upLnFilter2}) are coupled,
evading an explicit solution in terms of the channels only.
Moreover, problem~\eqref{upproblem} is non-convex when $N_k\!>\!1$,
thus there is no known efficient numerical algorithm for finding the global optimum.\footnote{Indeed multiple local optima have been observed in numerical experiments.  However, if $N_k\!=\!1$, or if $N_k\!>\!1$ and there is no constraint on the number of data streams per user, then this problem can be transformed into a convex program~\cite{SerYen04TSP,ShiSch07TSP}.}
%We are interested in finding suboptimal solutions that can be computed in a distributed manner.
As pointed out in~\cite{SerYen04TSP}, the precoders and filters can be computed iteratively.
%, which means that the receive filters $\boldsymbol{g}_k$'s can be computed with fixed precoders $\boldsymbol{v}_k$'s, and the precoders can be computed with fixed receive filters; upon arbitray initialization, these two procedures are iterated until all the filters converge or the predetermined maximum number of iterations is reached.
With fixed transmit precoders $\{\boldsymbol{v}_k\}$, the Lagrangian (\ref{Lagrangian}) is strictly convex in receive filters $\{\boldsymbol{g}_k\}$, and (\ref{eq:upLnFilter1}) gives the optimal solution.  Conversely, with fixed $\{\boldsymbol{g}_k\}$, we compute $\{\boldsymbol{v}_k\}$ and $\{\mu_k\}$ together from~\eqref{eq:upLnFilter2} with the requirement that $\|\boldsymbol{v}_k\|^2\!=\!P_k$ for every $k$.  The search for each $\mu_k$ can be efficient because the optimization of $(\boldsymbol{v}_k, \mu_k)$ is decoupled across users.

% This allows efficient computation because for fixed $\{\boldsymbol{g}_k\}$ and $\{\mu_k\}$, the Lagrangian is also strictly convex in $\{\boldsymbol{v}_k\}$.

Because $\mu_k$ is chosen to guarantee $\|\boldsymbol{v}_k\|^2\!\!=\!\!P_k$, the Lagrangian in (\ref{Lagrangian}) is always equal to the sum-MSE.  Thus the Lagrangian as well as the sum-MSE monotonically decrease with each update, and must converge since they are lower bounded by zero.
In case multiple $\mu_k$'s enforce $\|\boldsymbol{v}_k\|^2\!=\!P_k$,
% (\ref{powercon}) with equality,
% the mapping from $\boldsymbol{g}_k$'s to $\boldsymbol{v}_k$'s in (\ref{eq:upLnFilter2}) may not be unique. To make it unique, among the $\mu_k$'s that can satisfy (\ref{powercon}) with equality,
we choose the one that yields the minimum Lagrangian (\ref{Lagrangian}).
%(to minimize (\ref{Lagrangian}) $\mu_k$'s can be independently optimized since $\mu_k$'s are not couple directly).
If multiple $\mu_k$'s achieve the same minimum, then we can randomly
%the two agree on the choice of $\mu_k$ to
choose one.
%the largest (or smallest) one to break the tie.
Since the preceding procedure dictates a one-to-one mapping between $\{\boldsymbol{v}_k\}$ and $\{\boldsymbol{g}_k\}$, it is not difficult to show that the set of precoders and filters $\{\boldsymbol{v}_k,\boldsymbol{g}_k\}$ must also converge to a fixed point.
%. In addition, the fixed point must satisfy the KKT conditions, so that the solution
The fixed point must be a local optimum for problem (\ref{upproblem}),
although it may not be globally optimal.

Similar to the linear case, the iterative approach previously described
%\sout{in Section \ref{s:linear}} {\color{blue} **Comments from Mingguang: This Section number can be removed.**}
can also be applied to compute (\ref{eq:upNlnFilter1})--(\ref{eq:upNlnFilter2})
and the convergence of the filters follows from the monotonic decrease of the Lagrangian and the one-to-one mapping between the precoders and the receiver filters.
%\subsection{The Downlink}
For the downlink the set of linear filters can be computed by
iterating (\ref{eq:downLnFilter1}) and (\ref{eq:downLnFilter2}),
and the set of nonlinear filters can be computed by iterating
(\ref{eq:downnLnFilter2}) and (\ref{eq:downnLnFilter1}).
According to similar arguments as for the uplink,
all the filters will again converge to a fixed point.
%The details are omitted.

%so are their solutions.  In order to minimize uplink and downlink MSEs, the two set of equations should be solved separately.
% Strictly speaking, the obtained filters are different from the uplink case especially with disparate power constraints and noise levels at the BTS and each user.
% To obtain the corresponding precoders/receive filters for both uplink and downlink, two sum-MSE problems have to be separately formulated and solved.

\subsection{Simultaneous Uplink/Downlink Optimization}
\label{Sec:uniApproach}

%\textcolor[rgb]{1.00,0.00,0.00}{Due to reciprocity of the uplink and downlink channel matrices, the uplink and downlink optimization problem become identical if the SNRs are also identical, namely, if
%$P_k/\sigma_k^2 = P/\sigma^2$
% \begin{align}  \label{eq:1}
% \frac{P_k}{\sigma_k^2} = \frac{P}{\sigma^2}\,.
% \end{align}
%for every $k$.  %, then ~\eqref{eq:kktdown} become identical to~\eqref{eq:kktup}.
%In this case, the uplink precoder is identical to the downlink filter and vice versa.} %\textcolor[rgb]{0.00,0.00,1.00}{XXX COMMENTS: This is not true. Even if the SNRs are the same, the filters are also different in general, so we need to delete this.XXX}  In general, the uplink and downlink solutions are different even if the noise level at all stations, and the uplink and downlink transmit powers might possibly be similar or even the same.
\begin{comment}
The techniques used in Section~\ref{sec:linearup} can also be used to optimize the precoders and filters for downlink data communication.  The uplink and downlink problems are different because not only the SNRs of the uplink and downlink channels are likely to be different, but the power constraints are on individual users in the uplink, while the constraint is on the sum over all users in the downlink.
\end{comment}
The iterative optimization procedure in Section~\ref{Sec:iterativeSeparate}
was applied separately to compute the set of uplink filters, given
by a solution to (\ref{upproblem}), and the set of downlink filters,
given by a solution to (\ref{dproblem}).
%and yield the type of results since the two precoder/filter optimization problems in
Here we present a suboptimal iterative algorithm that
computes all uplink/downlink precoders and filters simultaneously.
Numerical results show that the performance of this algorithm
is very close to that of separate uplink and downlink optimizations
with practically relevant SNRs.

Let $\boldsymbol{v}$ and $\boldsymbol{g}$ denote the uplink precoders and receive filters as in Section~\ref{sec:linearup}, and let $\boldsymbol{t}$ and $\boldsymbol{r}$ denote the downlink precoders and receive filters.
Taking linear filters as an example, we apply the following four steps
in each iteration:
%\begin{subequations}
%\begin{align}
%  \boldsymbol{v}_i &\leftarrow \boldsymbol{r}_i,~\text{set}~\mathbb{E}\big[|\vec{x}_k|^2\big]\!=\!\rho\\
%% \frac{\boldsymbol{r}_i}{\|\boldsymbol{r}_i\|} \\
%  \boldsymbol{g}_k &\leftarrow \bigg( {\sum_{i=1}^{K}}\boldsymbol{H}_i \boldsymbol{v}_i  \boldsymbol{v}_i^\dagger\boldsymbol{H}_i^\dagger
%  +\frac{\sigma^2}{\rho}\boldsymbol{I}\bigg)^{-1}\boldsymbol{H}_k\boldsymbol{v}_k \label{eq:2} \\
%  \boldsymbol{t}_i &\leftarrow \boldsymbol{g}_i,~\text{set}~\mathbb{E}\big[|\overleftarrow{x}_k|^2\big]\!=\!\beta\\ %}{\|\boldsymbol{g}_i\|} \\
%  \boldsymbol{r}_k &\leftarrow \bigg({\sum_{i=1}^{K}}\boldsymbol{H}_k^\dagger \boldsymbol{t}_i\boldsymbol{t}_i^\dagger\boldsymbol{H}_k
% +\frac{\sigma_k^2}{\beta}\boldsymbol{I}\bigg)^{-1}\boldsymbol{H}_k^\dagger\boldsymbol{t}_k\,. \label{eq:4}
%  \end{align}
%  \label{eq:updn}
%\end{subequations}
\begin{enumerate}
\item Let each uplink $\boldsymbol{v}_i$ be the downlink receiver $\boldsymbol{r}_i$, and set $\mathbb{E}\big[|\vec{x}_k|^2\big]\!=\!\rho$.
\item Compute $\boldsymbol{g}_k=\bigg({\displaystyle\sum_{i=1}^{K}}\boldsymbol{H}_i \boldsymbol{v}_i  \boldsymbol{v}_i^\dagger\boldsymbol{H}_i^\dagger+\frac{\sigma^2}{\rho}\boldsymbol{I}\bigg)^{-1}\boldsymbol{H}_k\boldsymbol{v}_k$.
\item Let each downlink  $\boldsymbol{t}_i$ be the uplink receiver $\boldsymbol{g}_i$, and set $\mathbb{E}\big[|\cev{x_k}|^2\big]\!=\!\beta$.
\item Compute $\boldsymbol{r}_k=\bigg({\displaystyle\sum_{i=1}^{K}}\boldsymbol{H}_k^\dagger \boldsymbol{t}_i\boldsymbol{t}_i^\dagger\boldsymbol{H}_k
 +\frac{\sigma_k^2}{\beta}\boldsymbol{I}\bigg)^{-1}\boldsymbol{H}_k^\dagger\boldsymbol{t}_k$.
\end{enumerate}
That is, we use the same update formulas as for the uplink and downlink
optimizations discussed in Section~\ref{sec:linearup},
%and~\ref{s:down}
but constrain the precoders and receive filters at each terminal
to be the same. The uplink and downlink powers $\rho$ and $\beta$
are assumed to be given {\em a priori}.  This assumption is needed to prove the following convergence result, but will be relaxed in the next section.

\begin{prop}
  For any fixed positive $\rho$ and $\beta$, the precoders and filters $\big\{\boldsymbol{v}_k,\boldsymbol{g}_k,\boldsymbol{t}_k,\boldsymbol{r}_k\big\}$
computed according to the preceding iterative algorithm
converge to a fixed point as the number of iterations goes to infinity.
\end{prop}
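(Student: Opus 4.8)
\emph{Proof plan.} The whole procedure is two-block alternating minimization of a single nonnegative potential, so the plan is: (i) construct that potential; (ii) conclude it is monotone and bounded, hence convergent; (iii) promote convergence of the potential to convergence of the iterates, exactly as in Section~\ref{Sec:iterativeSeparate}. For (i), let $\vec{\varepsilon}_{\rho}(k)=\mathbb{E}\big[|\vec{x}_k-\boldsymbol{g}_k^\dagger\vec{\boldsymbol{y}}|^2\big]$ be the uplink MSE of user~$k$ in (\ref{upsignal}) with symbol power $\rho$ and noise $\sigma^2$, and $\cev{\varepsilon}_{\beta}(k)$ the downlink MSE of user~$k$ in (\ref{eq:downrecSig}) with symbol power $\beta$, noise $\sigma_k^2$, transmit beams $\{\boldsymbol{t}_i\}$ and receivers $\{\boldsymbol{r}_i\}$. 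Imposing the filter-reuse constraint $\boldsymbol{t}_i=\boldsymbol{g}_i$ and $\boldsymbol{r}_i=\boldsymbol{v}_i$, a direct computation---matching the cross terms through $\Re\{\boldsymbol{g}_k^\dagger\boldsymbol{H}_k\boldsymbol{v}_k\}=\Re\{\boldsymbol{v}_k^\dagger\boldsymbol{H}_k^\dagger\boldsymbol{g}_k\}$ and the interference terms through $|\boldsymbol{v}_k^\dagger\boldsymbol{H}_k^\dagger\boldsymbol{g}_i|=|\boldsymbol{g}_i^\dagger\boldsymbol{H}_k\boldsymbol{v}_k|$ followed by swapping the two summation indices---shows that the uplink and downlink sum-MSEs coincide up to their decoupled noise terms:
\[
\Psi\big(\{\boldsymbol{v}_k\},\{\boldsymbol{g}_k\}\big):=\frac{1}{\rho}\sum_{k=1}^{K}\vec{\varepsilon}_{\rho}(k)+\sum_{k=1}^{K}\frac{\sigma_k^2}{\beta}\|\boldsymbol{v}_k\|^2=\frac{1}{\beta}\sum_{k=1}^{K}\cev{\varepsilon}_{\beta}(k)+\frac{\sigma^2}{\rho}\sum_{k=1}^{K}\|\boldsymbol{g}_k\|^2.
\]

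The two faces of $\Psi$ make monotonicity transparent. Steps~1--2 set $\boldsymbol{v}_i=\boldsymbol{r}_i$ and then replace $\{\boldsymbol{g}_k\}$ by the unique minimizer of $\sum_k\vec{\varepsilon}_{\rho}(k)$ over $\{\boldsymbol{g}_k\}$, given in closed form by Step~2 of the algorithm; since the term $\sum_k(\sigma_k^2/\beta)\|\boldsymbol{v}_k\|^2$ in the first expression for $\Psi$ is free of $\{\boldsymbol{g}_k\}$, this is a minimization of $\Psi$ over the block $\{\boldsymbol{g}_k\}$. Symmetrically, Steps~3--4 set $\boldsymbol{t}_i=\boldsymbol{g}_i$, minimize $\sum_k\cev{\varepsilon}_{\beta}(k)$ over $\{\boldsymbol{r}_k\}$, and assign $\boldsymbol{v}_i\leftarrow\boldsymbol{r}_i$; by the second expression for $\Psi$ this is a minimization of $\Psi$ over the block $\{\boldsymbol{v}_k\}$. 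Hence the algorithm is exact block Gauss--Seidel on $\Psi$, so $\{\Psi^{(t)}\}$ is nonincreasing, and since $\Psi\geq 0$ it converges. Each block subproblem has the unique closed-form solution used in the algorithm because $\Psi$ is strongly convex in $\{\boldsymbol{g}_k\}$ for fixed $\{\boldsymbol{v}_k\}$---the interference part is positive semidefinite in $\{\boldsymbol{g}_k\}$ and $(\sigma^2/\rho)\sum_k\|\boldsymbol{g}_k\|^2$ is strictly convex---and, likewise, strongly convex in $\{\boldsymbol{v}_k\}$ for fixed $\{\boldsymbol{g}_k\}$. Moreover $\Psi\geq(\sigma^2/\rho)\sum_k\|\boldsymbol{g}_k\|^2$ and $\Psi\geq\sum_k(\sigma_k^2/\beta)\|\boldsymbol{v}_k\|^2$ while $\Psi$ is nonincreasing, so all iterates stay in a compact set.

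From here the endgame is the same as for the separate optimizations in Section~\ref{Sec:iterativeSeparate}. Strong convexity of each block subproblem gives $\Psi^{(t)}-\Psi^{(t+1)}\geq c\sum_k\big(\|\boldsymbol{g}_k^{(t+1)}-\boldsymbol{g}_k^{(t)}\|^2+\|\boldsymbol{v}_k^{(t+1)}-\boldsymbol{v}_k^{(t)}\|^2\big)$ for some $c>0$, so consecutive iterates approach one another; with the block-update maps continuous and their minimizers unique, every limit point of $\{\boldsymbol{v}_k,\boldsymbol{g}_k\}$ is a fixed point of the round-trip map (equivalently, solves the coupled stationarity conditions), and $\Psi$ is constant on the connected set of limit points. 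Since the fixed points are isolated, there is a single limit point, and then $\boldsymbol{t}_i=\boldsymbol{g}_i$ and $\boldsymbol{r}_i=\boldsymbol{v}_i$ converge as well. The two non-routine ingredients are the identity for $\Psi$---it is what lets one energy function serve both directions despite the mismatch in SNR and in the form of the power constraint, the very mismatch that precludes invoking MSE duality---and the final step, promoting convergence of $\Psi$ to convergence of the iterates; the latter, as in Section~\ref{Sec:iterativeSeparate}, rests on isolation of the fixed points, and is the point I expect to be the main obstacle to make fully rigorous (alternatively it follows from the {\L}ojasiewicz inequality for the polynomial $\Psi$).
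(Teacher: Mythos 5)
Your proposal is correct and follows essentially the same route as the paper: the potential $\Psi$ you construct is exactly the paper's $\vec{\phi}=\cev{\phi}$ (uplink sum-MSE normalized by $\rho$ plus the scaled precoder norms $\sigma_k^2\|\boldsymbol{v}_k\|^2/\beta$), the identity between its two faces via conjugating the cross term and swapping the summation indices of the interference double sum is the paper's key observation, and convergence follows from monotone decrease plus boundedness below. Your endgame---promoting convergence of $\Psi$ to convergence of the iterates via strong convexity, sufficient decrease, and isolation of fixed points (or {\L}ojasiewicz)---is in fact more explicit than the paper, which disposes of this step by appealing only to uniqueness of each block update and asserting the rest is ``easy to show.''
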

\begin{proof}
We provide a positive potential function of the precoders and filters
and show that this function decreases every iteration.
\begin{comment}
Assuming perfect interference cancellation in the uplink with decoding order from user $1$ to user $K$, the optimum receive filter for user $k$ with the sum-MSE criterion is
\begin{equation}\widetilde{\boldsymbol{g}}_k=\bigg(\displaystyle{\sum_{i=k}^{K}}\boldsymbol{H}_i\widetilde{\boldsymbol{v}}_i\widetilde{\boldsymbol{v}}_i^\dagger\boldsymbol{H}_i^\dagger
+\frac{\sigma^2}{\rho}\boldsymbol{I}\bigg)^{-1}\boldsymbol{H}_k\widetilde{\boldsymbol{v}}_k.\notag
\end{equation}
Assuming perfect interference pre-compensation with the {\em reverse} user-order (from user $K$ to user $1$), the receive filters in the downlink can be obtained as
\begin{equation}
\widetilde{\boldsymbol{v}}_k=\bigg(\displaystyle{\sum_{i=1}^{k}}\boldsymbol{H}_k^\dagger\widetilde{\boldsymbol{g}}_i\widetilde{\boldsymbol{g}}_i^\dagger\boldsymbol{H}_k
+\frac{\sigma_k^2}{\beta}\boldsymbol{I}\bigg)^{-1}\boldsymbol{H}_k^\dagger\widetilde{\boldsymbol{g}}_k.\notag
\end{equation}
\end{comment}
The uplink MSE, normalized by $\rho$, for user $k$ is
%\begin{equation}
%\tilde{\varepsilon}_{\text{up}}(k)\!=\!1\!+\frac{\!\sigma^2\|\boldsymbol{g}_k\|^2}{\rho}\!-\!2\Re\big\{\boldsymbol{g}_k^\dagger\boldsymbol{H}_k\boldsymbol{v}_k\big\}\!+\!\sum_{i=1}^{K}\big|\boldsymbol{g}_k^\dagger\boldsymbol{H}_i\boldsymbol{v}_i\big|^2
%\end{equation}
\begin{equation}
\vec{\varepsilon}(k,\rho)\!=\!1\!+\frac{\!\sigma^2\|\boldsymbol{g}_k\|^2}{\rho}\!-\!2\Re\big\{\boldsymbol{g}_k^\dagger\boldsymbol{H}_k\boldsymbol{v}_k\big\}\!+\!\sum_{i=1}^{K}\big|\boldsymbol{g}_k^\dagger\boldsymbol{H}_i\boldsymbol{v}_i\big|^2
\end{equation}
and the downlink MSE, normalized by $\beta$, for user $k$ is
\begin{equation}
\cev{\varepsilon}(k,\beta)\!=\!1\!+\frac{\!\sigma^2_k\|\boldsymbol{r}_k\|^2}{\beta}\!-\!2\Re\big\{\boldsymbol{r}_k^\dagger\boldsymbol{H}_k^\dagger\boldsymbol{t}_k\big\}\!+\!\sum_{i=1}^{K}\big|\boldsymbol{r}_k^\dagger\boldsymbol{H}^\dagger_k\boldsymbol{t}_i\big|^2.
\end{equation}

We define the functions:
%\begin{equation}
%f_{\text{dfe}}(\overline{\boldsymbol{g}}_1\!,\!\ldots\!,\!\overline{\boldsymbol{g}}_K\!,\!\overline{\boldsymbol{v}}_1\!,\!\ldots\!,\!\overline{\boldsymbol{v}}_K)\!\triangleq\!\sum_{k=1}^K\bigg(\widetilde{\varepsilon}_{\text{dfe}}(k)+\frac{\sigma^2_k\|\overline{\boldsymbol{v}}_k\|^2}{\beta}\bigg)\notag
%\end{equation}
\begin{equation}
\vec{\phi}(\boldsymbol{g}_1,\ldots,\boldsymbol{g}_K,\boldsymbol{v}_1,\ldots,\boldsymbol{v}_K)\!\triangleq\!\sum_{k=1}^K\bigg(\vec{\varepsilon}(k,\rho)\!+\!\frac{\sigma^2_k\|\boldsymbol{v}_k\|^2}{\beta}\bigg)
\end{equation}
and
\begin{equation}
\cev{\phi}(\boldsymbol{r}_1,\ldots,\boldsymbol{r}_K,\boldsymbol{t}_1,\ldots,\boldsymbol{t}_K)\!\triangleq\!\sum_{k=1}^K\bigg(\cev{\varepsilon}(k,\beta)\!+\!\frac{\sigma^2\|\boldsymbol{t}_k\|^2}{\rho}\bigg).
\end{equation}
Since $\boldsymbol{v}_k\!=\!\boldsymbol{r}_k$ and $\boldsymbol{t}_k\!=\!\boldsymbol{g}_k$, it can be verified that $\vec{\phi}=\cev{\phi}$ for any given set of filters $\{\boldsymbol{v}_k,\boldsymbol{g}_k,\boldsymbol{t}_k,\boldsymbol{r}_k\},k\!=\!1,\ldots,K$. The convergence of $\vec{\phi}$ (or $\cev{\phi}$) can then be verified by observing that $\vec{\phi}$ (or $\cev{\phi}$) monotonically decreases with an update for any $\boldsymbol{g}_k$ at the BTS or any $\boldsymbol{r}_k$ for a particular user, and it is lower bounded by zero. Since in each step the mapping from $\{\boldsymbol{v}_k\}$ to $\{\boldsymbol{g}_k\}$ or from $\{\boldsymbol{t}_k\}$ to $\{\boldsymbol{r}_k\}$ is unique, it is easy to show that the filters must converge to a fixed point.
\end{proof}

With similar arguments this proposition can be extended to nonlinear filters
and to interference networks with multiple users and multiple cells.
In the case of interference networks, the potential function can be chosen
as the the sum of the achievable MSE's plus the scaled norms of
the precoders at all transmitters.

%The details are omitted here.

% The preceding proof assumes that the parameters $\rho$ and $\beta$ are chosen a priori in order to satisfy the transmit power constraint after convergence. It is not clear how that can be done in practice. Furthermore, it is not guaranteed that the power constraint is satisfied at each iteration.

Ideally, the parameters $\rho$ and $\beta$ are such that the transmit power constraints at the BTS and mobiles are satisfied at convergence.  The values are of course unknown {\em a priori}.
A straightforward power scaling method is to normalize the precoders directly after each iteration so that the transmit power constraint is met.
%, e.g., in the uplink the updated precoder is $\sqrt{P_k}\widetilde{\boldsymbol{v}}_k/\|\widetilde{\boldsymbol{v}}_k\|$.
This is analogous to the max-SINR algorithm proposed in \cite{GomCad11IT} for interference networks. Although we cannot prove the convergence of the corresponding optimization algorithm, convergence has always been observed for numerical examples.  In fact the performance of this algorithm is very close to that of separate optimization of the uplink problem~\eqref{upproblem} and its downlink counterpart.

\section{Distributed Optimization via Bi-Directional Training} % with Unknown Channels}
\label{s:training}

In this section, we assume the channels are unknown to the BTS and mobiles {\em a priori}.
We propose a bi-directional training scheme for adapting
the precoders at the mobiles and the filters at the BTS with the goal of minimizing the uplink sum-MSE.
% consider approaches that are based on direct estimation of precoders and filters, which may reduce the amount of overhead incurred by exchanging CSI.
This bi-directional training scheme can be implemented in a distributed manner.
Specifically, estimates of (\ref{eq:upLnFilter1}) and (\ref{eq:upNlnFilter1})
can be obtained by training in the uplink, and estimates of
(\ref{eq:upLnFilter2}) and (\ref{eq:upNlnFilter2}) can be obtained
by training in the downlink.
%Therefore, the precoders and receive filters can be jointly optimized via alternating training in the uplink and downlink.
% in a TDD system.
%For both linear and nonlinear precoders and filters,
A similar scheme can be applied to the downlink;
since the uplink and downlink problems are similar,
we focus on the uplink in this section.
\begin{figure*}
  \center
  \includegraphics[width=0.75\textwidth]{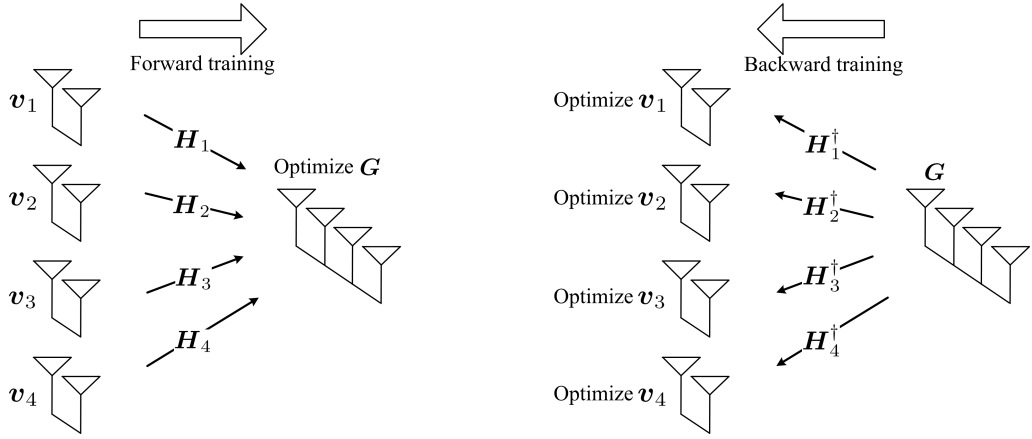}
  \caption{\small Forward--backward (bi-directional) training.}
   \label{Bidirectional}
\end{figure*}

\subsection{The Uplink with Linear Filtering}
\label{sec:LinearTrain}
We first observe the equivalence between iteratively computing
(\ref{eq:upLnFilter1})-(\ref{eq:upLnFilter2}) and bi-directional {\em optimization},
in which the following steps are iterated, as illustrated in Fig. \ref{Bidirectional}:
%\begin{enumerate}
%\item {\em Forward training:} The $K$ users synchronously transmit training symbols using the $\boldsymbol{v}_k$'s as uplink precoders, and the BTS optimizes the receive filters $\{\boldsymbol{g}_k\}$.
%\item {\em Backward training:} The BTS transmits training symbols to all mobiles using $\{\boldsymbol{g}_k\}$ as downlink precoders. User $k$ optimizes its receive filter $\boldsymbol{v}_k$ subject to a norm constraint.
%\item Iterate steps 1 and 2.
%\end{enumerate}
\begin{enumerate}
\item {\em Forward optimization:} Fix the uplink precoders
$\boldsymbol{v}_k$ and select the receive filters $\{\boldsymbol{g}_k\}$
to minimize the output MSE.
\item {\em Backward optimization:} Reverse the direction of transmission,
fix the downlink precoders at the BTS as $\{\boldsymbol{g}_k\}$ and select
the receive filter $\boldsymbol{v}_k$ to minimize the output MSE
subject to a norm constraint.
%\item Iterate steps 1 and 2.
\end{enumerate}

In the first step, the optimal linear receive filters are given by (\ref{eq:upLnFilter1}).
In the second step,
%We now consider the receive filters when $\boldsymbol{g}_k$'s are used as precoders for the downlink.
%let $\overleftarrow{x}_k$ be the downlink symbol for user $k$ and
let $\mathbb{E}\big[|\cev{x_k}|^2\big]\!=\!\gamma$,
where $\gamma$ is chosen such that the total downlink transmit power
$\sum_{k=1}^{K}\!\gamma\|\boldsymbol{g}_k\|^2$ does not exceed the constraint.
By channel reciprocity, the signal received by user $k$ is given by
(\ref{eq:downrecSig}) with $\boldsymbol{t}_i$ replaced by $\boldsymbol{g}_i$.
\begin{comment}
the received signal at user $k$ can be written as
\begin{equation}
\overleftarrow{\boldsymbol{y}}_k=\boldsymbol{H}_k^\dagger\bigg(\sum_{i=1}^K\boldsymbol{g}_i\overleftarrow{x}_i\bigg)+\overleftarrow{\boldsymbol{n}}_k,\label{upMSEdownsignal}
\end{equation}
where $\overleftarrow{\boldsymbol{n}}_k$ denotes the CSCG noise vector each entry of which has zero mean and variance $\sigma^2_k$.
\end{comment}
The MSE, normalized by $\gamma$, with receive filter $\boldsymbol{v}_k$ is then
\begin{align}
\cev{\varepsilon}(k,\gamma)\!\!=\!\!1\!+\!\frac{\sigma_k^2\|\boldsymbol{v}_k\|^2}{\gamma}\!\!-\!\!2\Re\big\{\boldsymbol{v}_k^\dagger\boldsymbol{H}_k^\dagger\boldsymbol{g}_k\big\}\!\!+\!\!\sum_{i=1}^{K}\!\big|\boldsymbol{v}_k^\dagger\boldsymbol{H}_k^\dagger\boldsymbol{g}_i\big|^2.\label{dlMSE}
\end{align}

In the second step we wish to
\begin{subequations}
\label{dlMSEproblem}
\begin{align}
&\underset{\boldsymbol{v}_k}{\operatorname{\text{minimize}}}~~
\cev{\varepsilon}(k,\gamma)\label{downMSE}\\
&\text{subject to}~~\|\boldsymbol{v}_k\|^2=P_k.\label{con:fullpowtrainDown}
\end{align}
\end{subequations}
By constraining the norm of $\boldsymbol{v}_k$, we account for the uplink power constraint.
%Here we constrain the norm of the receive filter, so this is somewhat different than a conventional MSE minimization problem.  The goal is to minimize the achievable sum-MSE in the uplink, and the achievable ``minimum MSE" given by this problem is not directly of interest to us. Rather, we seek the filter that solves problem (\ref{dlMSEproblem}), since it corresponds to the updated receive filters used at the BTS. Through training in the downlink, the receive filters used at the BTS are implicitly passed to each user.
Namely, the solution to problem (\ref{dlMSEproblem}) is given by
\begin{equation}
\boldsymbol{v}_k=\bigg(\displaystyle{\sum_{i=1}^{K}}\boldsymbol{H}_k^\dagger\boldsymbol{g}_i\boldsymbol{g}_i^\dagger\boldsymbol{H}_k
+\bigg(\frac{\sigma_k^2}{\gamma}\!+\!\nu\bigg)\boldsymbol{I}\bigg)^{-1}\boldsymbol{H}_k^\dagger\boldsymbol{g}_k,
\end{equation}
where the Lagrange multiplier $\nu$ is chosen to satisfy (\ref{con:fullpowtrainDown}).
If there are multiple $\nu$'s that satisfy (\ref{con:fullpowtrainDown}),
then the one which minimizes (\ref{downMSE}) should be chosen.
Since this solution is the same as in (\ref{eq:upLnFilter2}),
the previous iterative procedure for minimizing the Lagrangian in (\ref{Lagrangian})
is equivalent to bi-directional optimization.
Note that with the constraint (\ref{con:fullpowtrainDown}),
the term $\sigma_k^2\|\boldsymbol{v}_k\|^2/\gamma$ in (\ref{dlMSE})
is a constant so that the solution to problem (\ref{dlMSEproblem})
does not depend on the noise level at user $k$.
This is expected since problem (\ref{upproblem}) is to minimize
the uplink sum-MSE and the solution should not depend on the noise level
at the user side.

%In (\ref{upMSEdownsignal}), the total transmit power may exceed the downlink power constraint. If the power constraint is taken into account, we consider to scale the transmit filters in order to meet the power requirement. Specifically, the transmit
%
%factor to the  when One more important about this downlink

With unknown channels the MSE optimization criterion can be replaced
by a least squares cost function given the transmitted pilot symbols.
In the uplink, we assume that the users {\em synchronously} transmit
sequences of $\NT$ training symbols given by the matrix $\vec{\boldsymbol{S}}^T$
where $\vec{\boldsymbol{S}}\!=\!
[\vec{\boldsymbol{s}}_1^T,\ldots,\vec{\boldsymbol{s}}_K^T]$
and $\vec{\boldsymbol{s}}_k$ is the $1\times\NT$ row vector
containing the training symbols for user $k$.
The received signal at the BTS is then given by (\ref{upsignal}) with $\vec{x}_k\!=\!\vec{s}_k$. The corresponding sequence of estimated symbols is $\hat{\boldsymbol{s}}_k\!=\!\boldsymbol{g}_k^\dagger\vec{\boldsymbol{Y}}$ where $\vec{\boldsymbol{Y}}\!=\!\big[\vec{\boldsymbol{y}}(1),\ldots,\vec{\boldsymbol{y}}(\NT)\big]$. The receive filter is then chosen to minimize the cost function $\|\vec{\boldsymbol{s}}_k\!-\!\boldsymbol{g}_k^\dagger\vec{\boldsymbol{Y}}\|^2$, and the solution is given by
\begin{equation}
\boldsymbol{g}_k=\big(\vec{\boldsymbol{Y}}\vec{\boldsymbol{Y}}^\dagger\big)^{-1}\vec{\boldsymbol{Y}}\vec{\boldsymbol{s}}_k^\dagger.
\end{equation}

To estimate the uplink precoders, a similar training scheme
can be applied on the downlink where the training symbols
are passed through the precoders $\{\boldsymbol{g}_k\}$ and superposed
before transmission at the BTS. The received signal at user $k$
is then given by (\ref{eq:downrecSig}) with $\cev{x_k}\!=\!\cev{s_k}$ and $\boldsymbol{t}_i$ replaced by $\boldsymbol{g}_i$. At each user,
the cost function is given by $\|\cev{\boldsymbol{s}_k}\!-\!\boldsymbol{v}_k^\dagger\cev{\boldsymbol{Y}_k}\|^2/\gamma\!+\!\mu_k\big(\|\boldsymbol{v}_k\|^2\!-\!P_k\big)$ where an additional norm constraint is added, and $\cev{\boldsymbol{Y}_k}\!=\!\big[\cev{\boldsymbol{y}_k}(1),\ldots,\cev{\boldsymbol{y}_k}(\NT)\big]$. The solution is given by
\begin{equation}
\boldsymbol{v}_k=\big(\cev{\boldsymbol{Y}_k}\cev{\boldsymbol{Y}_k}^\dagger+\mu_k\boldsymbol{I}\big)^{-1}\cev{\boldsymbol{Y}_k}\cev{\boldsymbol{s}_k}^\dagger,\label{LSuser}
\end{equation}
where $\mu_k$ is chosen to satisfy the norm constraint $\|\boldsymbol{v}_k\|^2\!=\!P_k$. If multiple $\mu_k$'s satisfy the constraint, then the one which minimizes the cost function should be chosen.

\subsection{The Uplink with Nonlinear Filtering}
\label{sec:nonlinear}
We next present a bi-directional training scheme for implementing
the updates (\ref{eq:upNlnFilter1})--(\ref{eq:upNlnFilter2})
with nonlinear filters in the absence of channel information.
%with nonlinear filters can be estimated with bidirectional training.
We assume a DFD in the uplink, and the analogous interference
pre-compensation scheme for the downlink
as illustrated in Fig. \ref{backward_training}.
Downlink interference is pre-compensated in the {\em reverse} order
as for the uplink, so that the effective downlink channel
is triangular. (That is, for the downlink user $k$ receives no interference
from users $k+1, \dots,K$, whereas for the uplink it receives
no interference from users $1,\dots, k-1$ \cite{YuCio04IT}.)
%**give citation**{\color{blue} **Comments from Mingguang: Sorry I cannot access the IEEE Xplore database... What kinds of citations do you expect? Is it necessary to have a reference here?**}

As before, we first relate alternating optimization of
(\ref{eq:upNlnFilter1})-(\ref{eq:upNlnFilter2})
to bi-directional optimization.
Namely, fixing the uplink precoders $\{\boldsymbol{v}_k\}$,
the optimal feedforward filters $\{\boldsymbol{g}_k\}$
are given by (\ref{eq:upNlnFilter1}).
To optimize the uplink precoders, consider downlink transmission
with perfect interference pre-compensation
%(e.g., using dirty paper coding \cite{Costa83IT,WeiSte06IT}),
using precoders $\{\boldsymbol{g}_k\}$ at the BTS,
and with encoding order from user $K$ to user $1$.
The received symbol at user $k$ can be written as
(\ref{eq:downrecSigNln})
with $\boldsymbol{t}_i\!=\!\boldsymbol{g}_i$.
%\begin{equation}
%\overleftarrow{\boldsymbol{y}}_k=\boldsymbol{H}_k^\dagger\bigg(\sum_{i=1}^{k}\boldsymbol{g}_i\overleftarrow{x}_i\bigg)+\overleftarrow{\boldsymbol{n}}_k.\label{downSignal}
%\end{equation}
With receive filter $\boldsymbol{v}_k$,
the MSE (normalized by $\gamma$) at user $k$ is given by
\begin{align}
\cev{\varepsilon}_{\text{ipc}}\!(k,\!\gamma)\!\!=\!1\!\!+\!\!\frac{\sigma_k^2\|\boldsymbol{v}_k\|^2}{\gamma}\!\!-\!\!2\Re\big\{\boldsymbol{v}_k^\dagger\!\boldsymbol{H}_k^\dagger\boldsymbol{g}_k\big\}\!\!+\!\!\sum_{i=1}^{k}\!\big|\boldsymbol{v}_k^\dagger\!\boldsymbol{H}_k^\dagger\boldsymbol{g}_i\big|^2.\label{ipcMSE}
\end{align}
In analogy with the linear case we wish to
\begin{subequations}
\begin{align}
&\underset{\boldsymbol{v}_k}{\operatorname{\text{minimize}}}~~\cev{\varepsilon}_{\text{ipc}}(k,\gamma)\\
&\text{subject to}~~\|\boldsymbol{v}_k\|^2=P_k.\label{fullpow}
\end{align}
\end{subequations}
The solution to this problem is given by (\ref{eq:upNlnFilter2}).
Due to the constraint (\ref{fullpow}),
the term $\sigma_k^2\|\boldsymbol{v}_k\|^2$ in (\ref{ipcMSE}) is a constant
so that the solution to this problem again does not depend on
the noise level $\sigma_k^2$ at user $k$.

In the absence of CSI, the previous iterative optimization
can be implemented by iterating the following steps:

\begin{figure*}
  \center
  \includegraphics[width=0.65\textwidth]{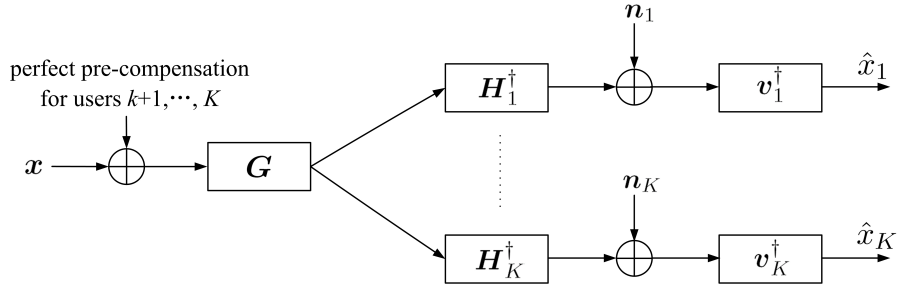}
  \caption{\small Backward training of the filter at user $k$ with perfect interference pre-compensation for users $k+1$ to $K$ in the downlink.}
   \label{backward_training}
\end{figure*}

\begin{enumerate}
\item {\em Forward training:} The $K$ users synchronously transmit uplink training symbols
using the $\boldsymbol{v}_k$'s as uplink precoders,
and the BTS optimizes the feedforward filters $\{\boldsymbol{g}_k\}$ with
perfect interference cancellation.
\item {\em Backward training:} The BTS transmits training symbols using
$\{\boldsymbol{g}_k\}$ as downlink precoders with perfect interference pre-compensation
in the {\em reverse} user order as for the uplink.
User $k$ optimizes its receive filter $\boldsymbol{v}_k$ subject to a norm constraint.
\item Iterate steps 1 and 2.
\end{enumerate}

Downlink interference pre-compensation can be achieved with dirty paper coding.
However, this has high computational complexity.
%and cannot be implemented in a symbol-by-symbol manner.
Here we will instead consider two alternative approaches for
pre-compensating interference at the BTS during the training phase:
sequential training and Tomlinson-Harashima precoding (THP) \cite{TseVis05}.
THP for MIMO channels is described in \cite{WinFis04TWC}
and can be used to pre-compensate downlink interference
in a symbol-by-symbol manner. Specifically,
the DFD used in the uplink can be reused as the precoders for THP.

As before, we replace the MSE optimization criterion with a least squares criterion
in adaptive mode.
In the forward training step the users synchronously transmit sequences of
$\NT$ training symbols given by the matrix
$\vec{\boldsymbol{S}}^T$ where
$\vec{\boldsymbol{S}}\!=\![\vec{\boldsymbol{s}}_1^T,\ldots,
\vec{\boldsymbol{s}}_K^T]$ and $\vec{\boldsymbol{s}}_k$
is the $1\times\NT$ row vector containing the training symbols.
The received signal at the BTS is then given by (\ref{upsignal})
with $\vec{x}_k\!=\!\vec{s}_k$.
To obtain the feedforward and feedback receive filters for user $k$,
the known training symbols for users $1,\dots,k-1$ can be included
in the least squares objective as proposed in \cite{HonWoo04TWC}.
The corresponding sequence of estimated symbols can be written as
\begin{equation}
\hat{\boldsymbol{s}}_k=\boldsymbol{g}_k^\dagger\vec{\boldsymbol{Y}}-\boldsymbol{b}_k\vec{\boldsymbol{S}}_1^{k-1},\label{estAlpha}
\end{equation}
where $\vec{\boldsymbol{Y}}\!=\!\big[\vec{\boldsymbol{y}}(1),\ldots,\vec{\boldsymbol{y}}(\NT)\big]$, $\boldsymbol{b}_k$ is a row vector of dimension $k-1$ that contains the nonzero elements of the $k$-th row of $\boldsymbol{B}\!-\!\boldsymbol{I}$, and $\vec{\boldsymbol{S}}_1^{k-1}\!=\![\vec{\boldsymbol{s}}_1^T,\ldots,\vec{\boldsymbol{s}}_{k-1}^T]^T$.
The receive filters are then chosen to minimize the cost function $\|\vec{\boldsymbol{s}}_k\!-\!\boldsymbol{g}_k^\dagger\vec{\boldsymbol{Y}}\!+\!\boldsymbol{b}_k\vec{\boldsymbol{S}}_1^{k-1}\|^2$, and the corresponding solution is given by
\begin{equation}
\begin{bmatrix}\boldsymbol{g}_k\\-\boldsymbol{b}_k^\dagger\end{bmatrix}
\!=\!\begin{bmatrix}\vec{\boldsymbol{Y}}\vec{\boldsymbol{Y}}^\dagger & \vec{\boldsymbol{Y}}(\vec{\boldsymbol{S}}_1^{k-1})^\dagger\\ \vec{\boldsymbol{S}}_1^{k-1}\vec{\boldsymbol{Y}}^\dagger & \vec{\boldsymbol{S}}_1^{k-1}(\vec{\boldsymbol{S}}_1^{k-1})^\dagger\end{bmatrix}^{-1}
\!\begin{bmatrix}\vec{\boldsymbol{Y}}\\\vec{\boldsymbol{S}}_1^{k-1}\end{bmatrix}\vec{\boldsymbol{s}}_k^\dagger.
\end{equation}
We emphasize that in this way the filters are estimated {\em directly}, as opposed
to estimating the channels first followed by computation of the filters.

%Backward training is based on the assumption that part of the downlink interference can be perfectly pre-compensated, e.g., with dirty paper coding, so that the effective downlink channel has a triangular structure. However, dirty paper coding has high computational complexity. Next, we consider two optional approaches for pre-compensating interference at the BTS.
We next describe two downlink training methods that can be used to estimate
the uplink precoders, accounting for the interference cancellation at the BTS.

\emph{Sequential training:} Training of the receive filter at each user
can be implemented by sequentially scheduling transmission
of training symbols to the users in order $k=1,\dots,K$.
Specifically, we layer the training sequences for user 1, then add user 2,
then add user 3, and so on, until user $K$.
To optimize the receiver filter at user $k$,
training symbols for users $1,\ldots,k$ are therefore simultaneously transmitted
with linear precoding filters $\boldsymbol{g}_1 , \dots,\boldsymbol{g}_k$
so that the received signal vector at user $k$
is given by (\ref{eq:downrecSigNln}) with
$\boldsymbol{t}_i\!=\!\boldsymbol{g}_i$ and
$\cev{x_i}\!=\!\cev{s_i}$
(using the notation defined in Section \ref{sec:LinearTrain}).
As for the linear case, user $k$ can estimate its receive filter
$\boldsymbol{v}_k$ by minimizing the least squares cost function
$\|\cev{\boldsymbol{s}_k}\!-\!\boldsymbol{v}_k^\dagger\cev{\boldsymbol{Y}_k}\|^2/\gamma\!+\!\mu_k\big(\|\boldsymbol{v}_k\|^2\!-\!P_k\big)$
and the corresponding solution is the same as in (\ref{LSuser}).

This sequential training method perfectly pre-compensates for interference.
However, a drawback is that when the filter for user $k$ is being estimated,
the symbols corresponding to users $1,\dots,k-1$ are needed only to generate
the appropriate interference; they are not used to update the corresponding filters.
Hence this method will generally require more training than the simultaneous training
method described next.

\emph{Training with THP:}
This method is illustrated in Fig.~\ref{downlink}, which shows the
uplink DFD filters being used as the precoders for THP.
%As shown in Fig.~\ref{downlink},
This can be viewed as moving the feedback part of the DFD
in Fig.~\ref{uplink} to the transmitter with the slicer function
replaced by a modulo operation, which depends on the signal constellation
used for modulation (see also \cite{WinFis04TWC}).
\begin{figure*}
  \center
  \includegraphics[width=0.75\textwidth]{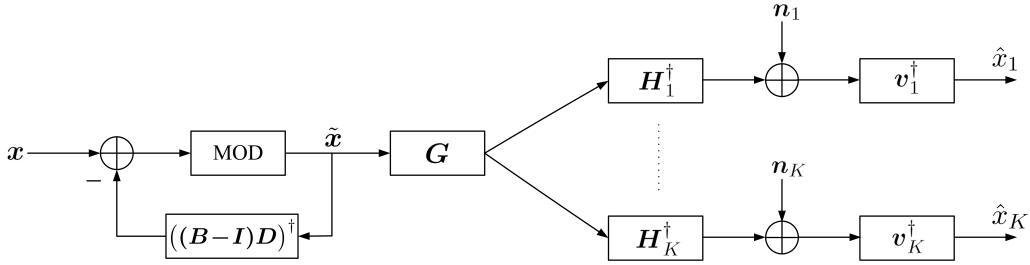}
  \caption{\small Tomlinson-Harashima precoding (THP) in the downlink maps $x$ to $\tilde{x}$.}
   \label{downlink}
\end{figure*}

Because the estimates of the symbols in the uplink are biased,
the DFD feedback filter must be modified to remove this bias
when used in the THP-precoder. Specifically, the feedback filter in
the downlink precoder is the Hermitian transpose of the matrix
\begin{equation}\notag
(\boldsymbol{B}-\boldsymbol{I})\boldsymbol{D}=
\begin{bmatrix}
0 & \hdotsfor{3} & 0\\
\dfrac{b_{21}}{\alpha_1} & 0 & \hdotsfor{2} & 0\\
\dfrac{b_{31}}{\alpha_1} & \dfrac{b_{32}}{\alpha_2} & 0 & \dots & 0\\
\hdotsfor{5}\\
\dfrac{b_{K1}}{\alpha_1} &
\dfrac{b_{K2}}{\alpha_2} &
\dfrac{b_{K3}}{\alpha_3} & \dots & 0
\end{bmatrix}
\end{equation}
where $\boldsymbol{D}\!=\!\text{diag}\big(\frac{1}{\alpha_1},\ldots,\frac{1}{\alpha_K}\big)$
and $\alpha_k\!\triangleq\!\boldsymbol{g}_k^\dagger\boldsymbol{H}_k\boldsymbol{v}_k$
is the bias factor for estimating $\vec{x}_k$ in (\ref{dfesignal}).

We now describe the interference pre-compensation for the downlink
using the feedforward receive filters $\{\boldsymbol{g}_k\}$ as precoders
and $\boldsymbol{v}_k$ (the precoder for user $k$ in the uplink)
as the receiver filter at user $k$.
Denoting the input to the THP encoder as $\cev{x_k}$,
the output is then
\begin{align}
\tilde{x}_{k}
&=f_{\text{mod}}\bigg(\cev{x_k}-\sum_{i=k+1}^K\bigg(\frac{b_{ik}}{\alpha_k}\bigg)^\dagger\tilde{x}_i\bigg)\\
&=Z_k+\cev{x_k}-\sum_{i=k+1}^K\bigg(\frac{b_{ik}}{\alpha_k}\bigg)^\dagger\tilde{x}_i,\label{modoper}
\end{align}
where $f_{\text{mod}}(\cdot)$ denotes the modulo operation and
$Z_k$ denotes the corresponding offset due to the modulo operation,
which depends on the current symbol $\cev{x_k}$,
the previous encoded symbols $\tilde{x}_i,i\!=\!k\!+\!1,\ldots,K$,
and the feedback matrix. In general, each real and imaginary part
is an integer multiple of the maximum distance
%{\color{red}**(in either real or imaginary dimension)**}
between two constellation points (see \cite{WinFis04TWC}).
%**isn't the offset complex?**.{\color{blue} **Comments from Mingguang: Yes, Z should be a complex number.**}

The received symbol at user $k$ at the output of the receive filter is given by
\begin{align}
\hat{x}_k
&=\boldsymbol{v}_k^\dagger\bigg(\boldsymbol{H}_k^\dagger\bigg(\sum_{i=1}^{K}\boldsymbol{g}_i\tilde{x}_i\bigg)+\cev{\boldsymbol{n}}_k\bigg)\\
&=\boldsymbol{v}_k^\dagger\boldsymbol{H}_k^\dagger\boldsymbol{g}_k(Z_k+\cev{x_k})+\boldsymbol{v}_k^\dagger\bigg(\boldsymbol{H}_k^\dagger\bigg(\sum_{i=1}^{k-1}\boldsymbol{g}_i\tilde{x}_i\bigg)+\cev{\boldsymbol{n}}_k\bigg)\notag\\
&~~~~~~+\sum_{i=k+1}^{K}\bigg(\boldsymbol{v}_k^\dagger\boldsymbol{H}_k^\dagger\boldsymbol{g}_i\tilde{x}_i-\boldsymbol{v}_k^\dagger\boldsymbol{H}_k^\dagger\boldsymbol{g}_k\bigg(\frac{b_{ik}}{\alpha_k}\bigg)^\dagger \tilde{x}_i\bigg).\label{drecsignal}
\end{align}
Since $b_{ik}\!=\!\boldsymbol{g}_i^\dagger\boldsymbol{H}_k\boldsymbol{v}_k$ and $\alpha_k\!=\!\boldsymbol{g}_k^\dagger\boldsymbol{H}_k\boldsymbol{v}_k$, the last summation in (\ref{drecsignal}) is zero, i.e., the interference to user $k$ generated by the users $k\!+\!1$ to $K$  can be completely removed, which means that user $k$ experiences interference only from users $1$ through $k\!-\!1$. The effective desired symbol is then $Z_k\!+\!\cev{x_k}$.
%which incurs the offset $Z_k$ due to THP, when compared to the originally modulated symbol $\overleftarrow{x}_k$.

We emphasize that THP in the downlink is implemented in the \emph{reverse}
user order as cancellation in the uplink. With the modulo operation,
the power of the encoded symbols may increase relative to the original power (prior to THP)
to a small extent, and the increment decreases as the constellation size increases \cite{WinFis04TWC,SchShi05VTC}. In (\ref{drecsignal}), since the powers of the effective training symbols and the residual interference are different from the corresponding powers in (\ref{eq:downrecSigNln}) (with $\boldsymbol{t}_i\!=\!\boldsymbol{g}_i$), training with THP in the downlink
is an approximate approach to estimating (\ref{eq:upNlnFilter2}).

The training symbols for the downlink for different users are
precoded according to (\ref{modoper}), where the bias factor
for $\vec{s}_k$ can be estimated as
$\alpha_k\!=\!\hat{\boldsymbol{s}}_k\vec{\boldsymbol{s}}_k^\dagger/n$
and $\hat{\boldsymbol{s}}_k$ is given by (\ref{estAlpha})
once the uplink receive filters have been obtained.
The input to the feedforward filters $\boldsymbol{g}_k$'s (used as precoders)
at the BTS is then given by (\ref{modoper}) with $\cev{x_i}$
replaced by the training symbol $\cev{s_i}$ and
$\tilde{x}_i$ replaced by $\tilde{s}_i$.
Taking the desired symbol in (\ref{drecsignal}) as $Z_k\!+\!\cev{s_k}$,
the receiver filter is then obtained by minimizing the least squares cost function
% in place of $\overleftarrow{s}_k$ itself.
%Then at user $k$, the receive filter can be obtained by minimizing the cost function
$\|\boldsymbol{Z}_k\!+\!\cev{\boldsymbol{s}_k}\!-\!\boldsymbol{v}_k^\dagger\cev{\boldsymbol{Y}_k}\|^2/\gamma\!+\!\mu_k\big(\|\boldsymbol{v}_k\|^2\!-\!P_k\big)$, where $\cev{\boldsymbol{Y}_k}\!=\!\big[\cev{\boldsymbol{y}_k}(1),\ldots,\cev{\boldsymbol{y}_k}(n)\big]$ denotes the received signal vectors and each entry $\cev{\boldsymbol{y}_k}$ is given by (\ref{eq:downrecSig}) with $\cev{x_i}\!=\!\tilde{s}_i$ and $\boldsymbol{t}_i\!=\!\boldsymbol{g}_i$. The corresponding solution is given by
\begin{equation}
\boldsymbol{v}_k=\big(\cev{\boldsymbol{Y}_k}\cev{\boldsymbol{Y}_k}^\dagger+\mu_k\boldsymbol{I}\big)^{-1}\cev{\boldsymbol{Y}_k}\big(\boldsymbol{Z}_k+\cev{\boldsymbol{s}_k}\big)^\dagger,\label{LSthpuser}
\end{equation}
where $\mu_k$ is chosen to satisfy the constraint $\|\boldsymbol{v}_k\|^2\!=\!P_k$.

This expression for $\boldsymbol{v}_k$ depends on the offset $Z_k$, which is introduced
at the BTS, and therefore must be estimated at the receivers.
Here we detect $Z_k$ first before estimating $\boldsymbol{v}_k$.
For downlink training the detection can be based on the output
of the filter $\boldsymbol{v}_k$ (uplink precoder), which
from (\ref{drecsignal}) is given by
\begin{align}
\tilde{Z}_k\!\!=\!\boldsymbol{v}_k^\dagger\boldsymbol{H}_k^\dagger\boldsymbol{g}_k
\!\big(Z_k\!\!+\!\!\cev{s_k}\big)
\!+\!\boldsymbol{v}_k^\dagger\!\bigg(\sum_{i=1}^{k-1}\!\boldsymbol{H}_k^\dagger\boldsymbol{g}_i\tilde{s}_i
\!+\!\cev{\boldsymbol{n}}_k\!\bigg).\label{effetrsym}
\end{align}
The estimate of $Z_k$ can be obtained by slicing $\tilde{Z}_k$.
Since the points in the discrete set of possible $Z_k$'s have a minimum distance
that is greater than the maximum distance between constellation points
in each real or imaginary dimension, we expect that the performance loss
due to errors in detecting $Z_k$ is small.
This is verified by the numerical examples in the next section.

\section{Numerical Results}
\label{sec:numericalResults}
\subsection{Uplink Examples}
Fig.~5(a) shows uplink sum rates versus SNR achieved by different schemes
in a single cell with $K\!=\!4$ users.
%In Fig.~\ref{f:up_sum_rate}, we plot the uplink sum rate achieved by applying the precoders and receive filters that solve the uplink sum-MSE problem averaged over channel realizations in a single cell with $K\!=\!4$ users.
There are $N\!=\!4$ antennas at the BTS and two antennas
at each user $(N_k\!=\!2,~k=1,\ldots,K)$.
%\sout{[The top curve is the sum rate achieved by applying
%the precoders and receive filters that solve the uplink sum-MSE problem.}
%**delete this sentence? (the top curve is sum capacity) **]{\color{blue} ** Comments from Mingguang:  Yes, it should be deleted.**}
The sum rate curve is obtained by choosing the filters
$\boldsymbol{v}_k$'s to maximize the achievable sum rate
$\log\det\big(\boldsymbol{I}_N\!+\!\sum_{k=1}^{K}\!
\boldsymbol{H}_k\boldsymbol{v}_k\boldsymbol{v}_k^\dagger\boldsymbol{H}_k^\dagger\big)$.
The other four curves show the performance of linear and nonlinear filters
with 100 forward-backward iterations, and with only two iterations.
In each iteration an MMSE update is performed, corresponding
to an infinite amount of training.
Equal transmit power is assumed for all the users and
the curves are averaged over a thousand random channel realizations.
For the nonlinear precoder, THP with quadrature phase-shift keying (QPSK)
is used for downlink training.

The results show that with 100 forward-backward iterations, the sum rate
with nonlinear filters is quite close to the uplink sum capacity, and is somewhat
higher than with linear filters. With two iterations the gap between
linear and nonlinear filters increases, especially at high SNRs.
Hence nonlinear filters are more attractive when the number of forward-backward
iterations is highly constrained.

%In Fig.~\ref{f:up_sum_rate_2iters_20symbols},
Fig.~5(b) compares the sum rates achieved by limited and unlimited bi-directional training.
Two forward-backward iterations are used and the training sequence
for each user is a randomly generated QPSK sequence.
With 20 training symbols per iteration,
the performance is close to that with MSE updates (infinite amount of training).
This figure indicates that nonlinear filters still outperform linear filters
with finite training. However, the gap in sum rate for nonlinear filters
with infinite and finite training
is slightly larger compared to the analogous gap with linear filters.
Fig.~5(c) shows the effects of varying the number of training symbols on nonlinear filters,
again with two iterations.
%It is observed that increasing the number of
%training symbols gives diminishing gains.
%**delete "Nonlinear" and the associated arrow and ellipse in the figure** {\color{blue} ** Comments from Mingguang:  It is deleted.**}
\begin{figure}
  \setcounter{figure}{5}
  \center
  \subfloat[]{
  \includegraphics[width=0.5\textwidth]{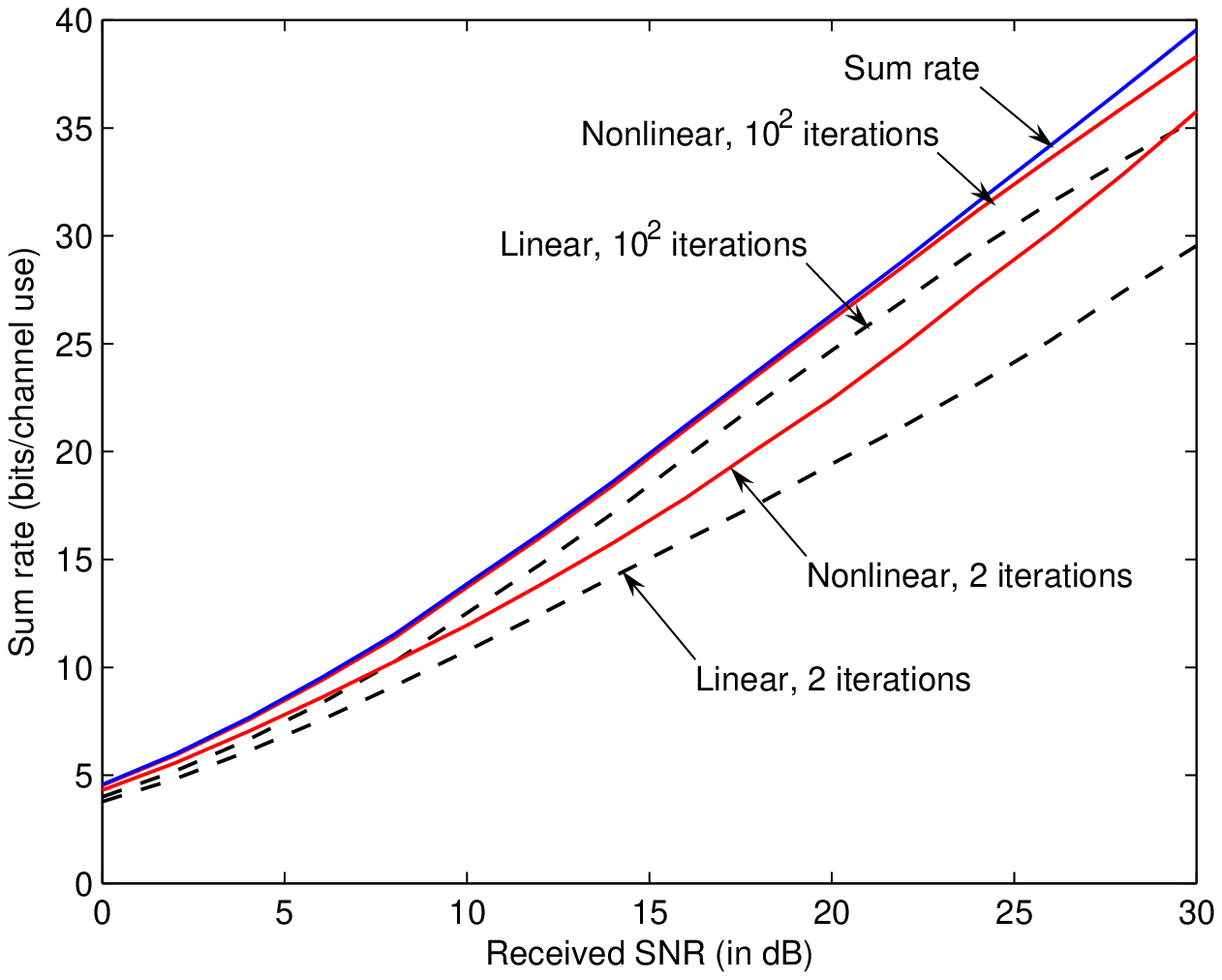}
 % \caption{Sum rate in the uplink with $K\!=\!4$ users, $\Nb\!=\!4$ antennas at the BTS, and $N_k\!=\!2$ antennas at each user.}
  \label{f:up_sum_rate_2iters}
} \,
  \subfloat[]{
  \includegraphics[width=0.5\textwidth]{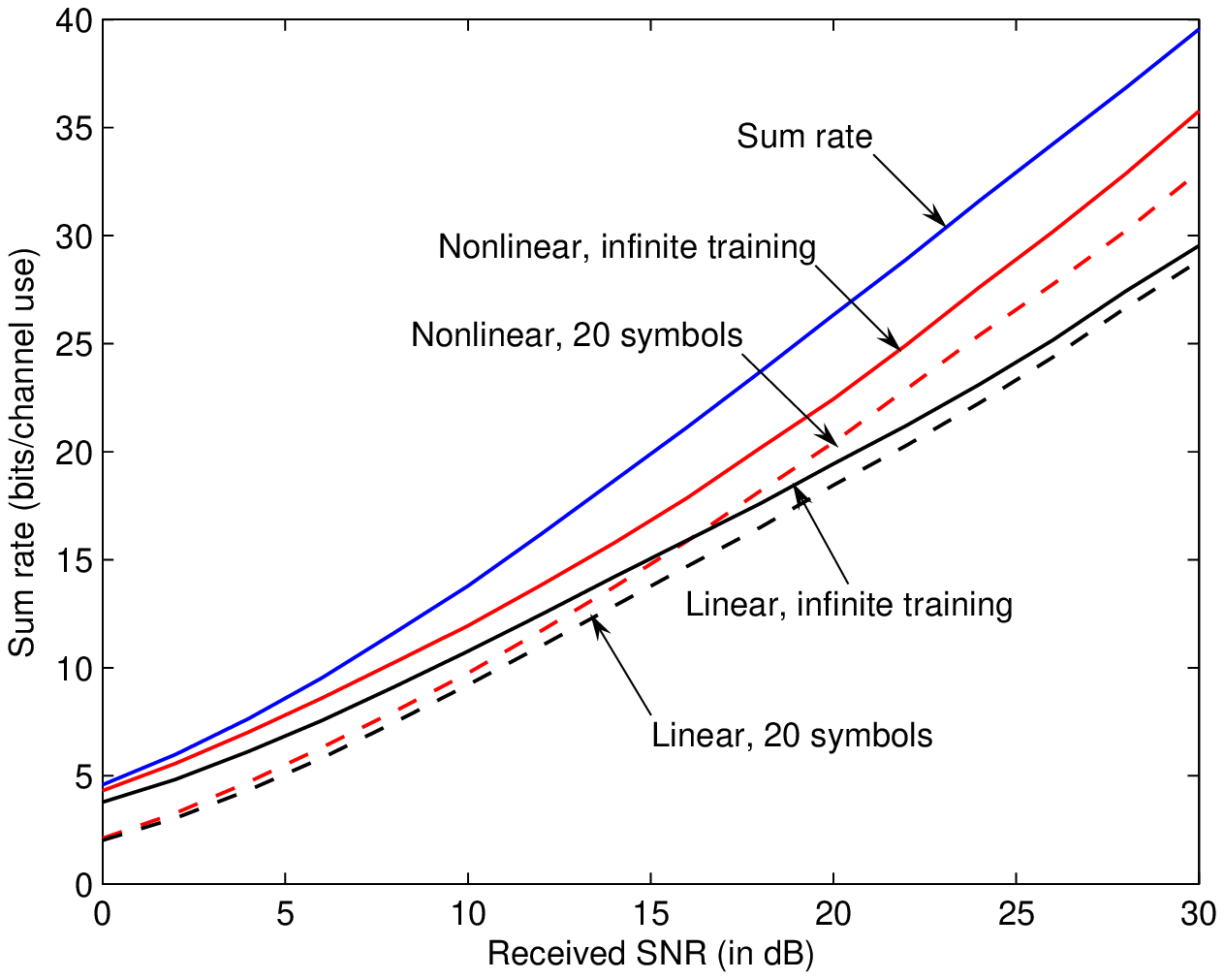}
  \label{f:up_sum_rate_2iters_20symbols}
} \,
\end{figure}
\begin{figure}
  \setcounter{figure}{5}
  \ContinuedFloat
  \center
  \subfloat[]{
  \includegraphics[width=0.5\textwidth]{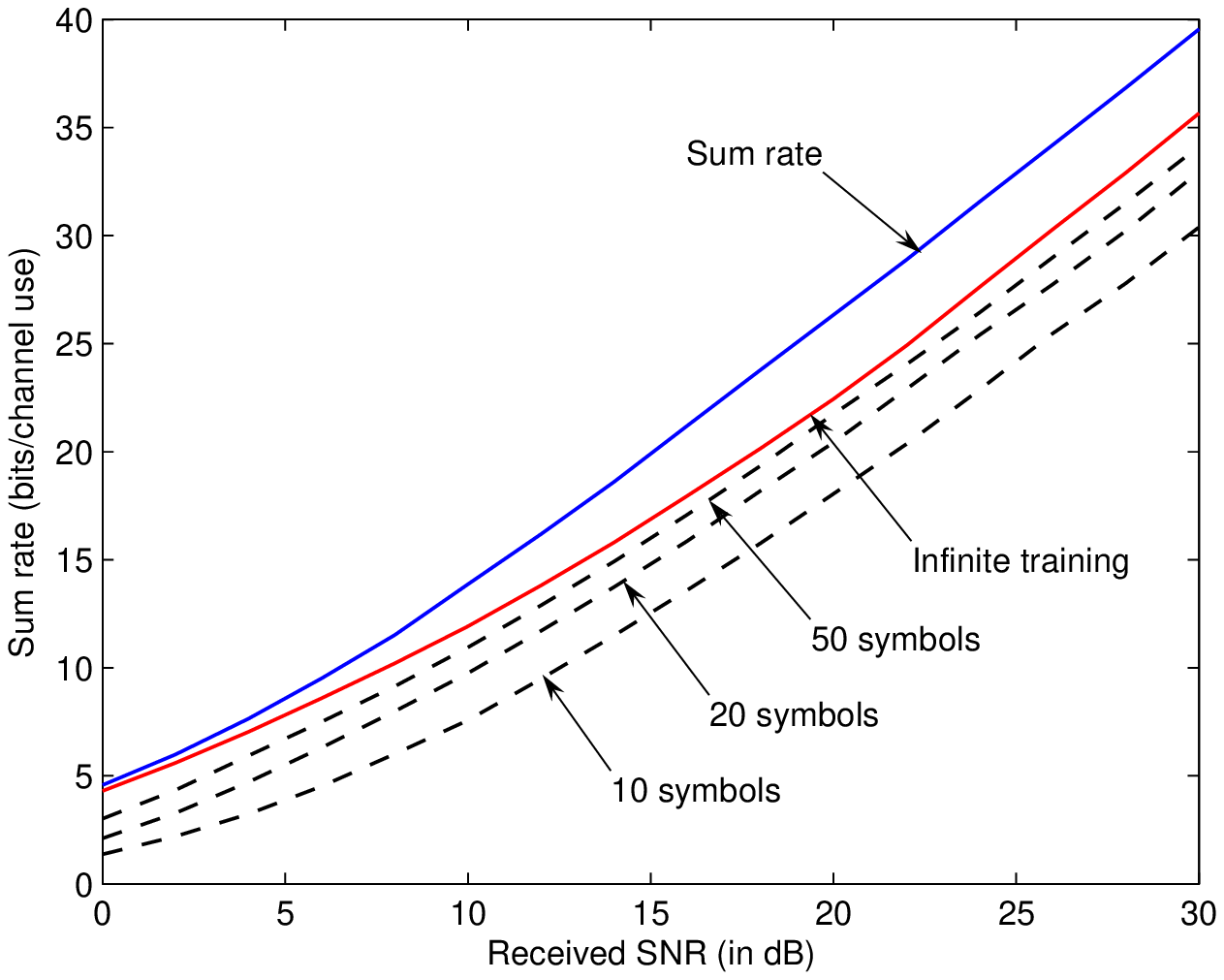}
  \label{f:up_sum_rate_2iters_training}
}
\caption{\small Uplink sum rate versus SNR with $K\!=\!4$ users, $\Nb\!=\!4$ antennas
at the BTS, and $N_k\!=\!2$ antennas at each user.
There are two forward-backward iterations in (b) and (c).}
\label{f:up_sum_rate}
\end{figure}

Fig.~\ref{f:up_sum_rate_3Nr} shows the achievable sum rate
in an overloaded system with $N\!=\!3$ receive antennas
at the BTS and $K\!=\!4$ users. Each iteration is an MMSE update (infinite training).
With linear filters the sum rate saturates as the received SNR increases
since there are not enough degrees of freedom to null the interference.
%in such a system configuration.
However, with nonlinear filters the maximum degrees of freedom
(asymptotic slope of the sum rate) of three can be achieved
as the received SNR increases.\footnote{The numerical results also show
that in the uplink each user gets a positive rate although the rates differ
across users. In the downlink each user gets roughly the same rate.}
%**do all 4 users get a positive rate?
%can we say anything about the rates across users?**{\color{blue} ** Comments from Mingguang:  each user indeed gets a positive rate although the sum rate seems not equally distributed across users in the uplink, but in the downlink, the rate is roughly equal for different users.  I have added some words at the end of this paragraph.**}
The sum rate approaches the capacity upper bound as the number of
forward-backward iterations increases.
(More forward-backward iterations are needed at high SNRs
to approach the sum capacity.) With three receive antennas
and linear filters, full degrees of freedom can also be achieved
if the BTS schedules transmissions from only three out of the four users.
The corresponding achievable sum rate is also included
for comparison where the three users are scheduled randomly.
It is observed that nonlinear filters (with four users scheduled)
still outperform linear filters over a wide range of SNRs
in this scenario. However, at high SNRs (above 22 dB)
the linear filters outperform the nonlinear filters,
since the nonlinear filters require more iterations to reach the sum capacity.

%**Edits for Fig 6: "Linear, 2 or 100 iterations" $-->$ "Linear, 2 and 100 iterations"\\
%typo: "iteratioins", add "4 users" in bottom 2 "Linear" curves
%{\color{blue} ** Comments from Mingguang:  The texts are modified and corrected.**}

%\begin{figure}[htdp]
%  \center
%  \subfigure[]{
%  \includegraphics[width=0.75\textwidth]{uplink_2iterations}
% % \caption{Sum rate in the uplink with $K\!=\!4$ users, $\Nb\!=\!4$ antennas at the BTS, and $N_k\!=\!2$ antennas at each user.}
%  \label{f:up_sum_rate_2iters}
%}
%
%  \subfigure[]{
%  \includegraphics[width=0.75\textwidth]{uplink_2iterations_20symbols}
%  \label{f:up_sum_rate_2iters_20symbols}
%}
%
%  \subfigure[]{
%  \includegraphics[width=0.75\textwidth]{uplink_2iterations_10_20_50symbols}
%  \label{f:up_sum_rate_2iters_training}
%}
%\caption{Sum rate in the uplink: $K\!=\!4$ users, $\Nb\!=\!4$ antennas at the BTS, $N_k\!=\!2$ antennas at each user, and two forward-backward iterations in (b) and (c).}
%\label{f:up_sum_rate}
%\end{figure}
\begin{figure}
  \center
  \includegraphics[width=0.5\textwidth]{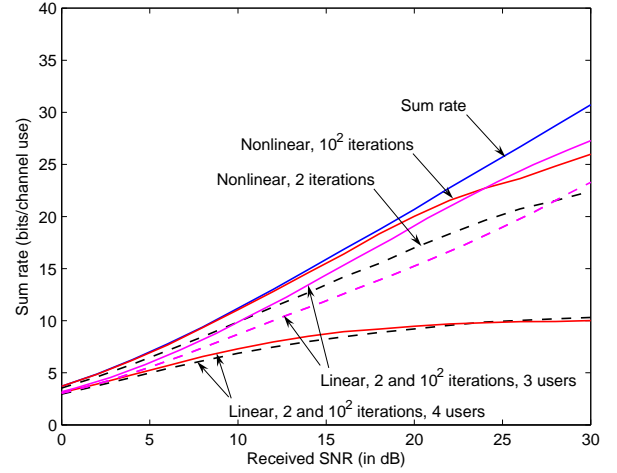}
\caption{\small Uplink sum rate with $\Nb\!=\!3$ antennas at the BTS
and $N_k\!=\!2$ antennas at each user. The curves with nonlinear filtering
correspond to $K\!=\!4$ users.}
\label{f:up_sum_rate_3Nr}
\end{figure}

\subsection{Simultaneous Uplink/Downlink Optimization}
Fig.~\ref{f:sum_rate_unified} compares the uplink and downlink sum rates
achieved by separately minimizing the MSE for each
with the corresponding rates achieved with simultaneous
optimization, as described in Section \ref{Sec:uniApproach}.
%achieved by applying the nonlinear precoders and receive filters optimized
%separately for the uplink and downlink
%obtained from the unified approach in Section \ref{Sec:uniApproach} are compared with those achieved by separately solving the uplink or downlink MSE minimization problems.
For simultaneous optimization the transmit power at each node
is normalized to satisfy the power constraint in each forward-backward iteration.
While there are slight differences in the curves,
simultaneous optimization performs essentially the same as separately optimizing
%One sees that the unified approach performs as well as separately optimizing
the uplink or downlink precoders and receive filters.
Therefore the simultaneous approach is preferable from the perspective
of reducing the training overhead.
\begin{figure}
  \center
  \includegraphics[width=0.5\textwidth]{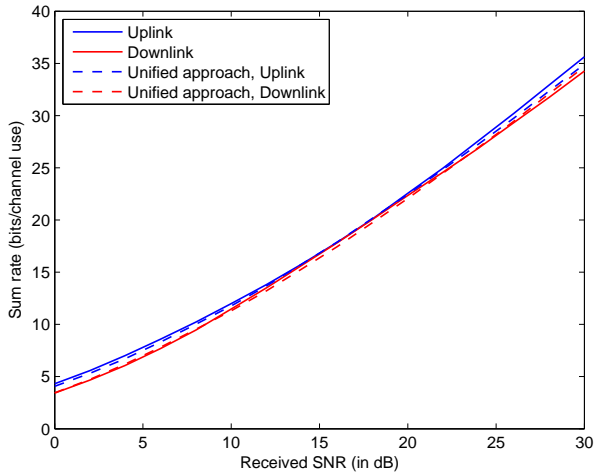}
\caption{\small Comparison of simultaneous training with separate
uplink/downlink training. Uplink and downlink sum rates are shown
with nonlinear filters, $N\!=\!4$ antennas at the BTS,
$N_k\!=\!2$ antennas at each user, and two forward-backward iterations.}
\label{f:sum_rate_unified}
\end{figure}

\subsection{Multiple-Cell Scenario}
It is straightforward to extend bi-directional training
for linear and nonlinear filters discussed in Section \ref{s:training}
to the scenario with multiple cells.
Joint optimization of linear precoders and receivers in a cellular system
has been considered in many references, including
%**Tolli/Juntti, other refs...**
\cite{CodTol07TSP, SuhTse11TC, ZhuBer11ICASSP, KomTol13TSP,HanLeN13TSP}.
%**journal version of Suh/Tse?**.{\color{blue} ** Comments from Mingguang:  Added.**}
Achieving the maximum degrees of freedom in the multiple-cell scenario
requires interference alignment across cells, as discussed in
\cite{GomCad11IT,ZhuBer11ICASSP,SuhTse11TC}. That is, the other-cell interference
lies in a subspace with lower dimension than the number of other-cell users.
A related bi-directional training method was presented in \cite{ZhuBer11ICASSP}.

Here we apply a nonlinear DFD at each BTS for the uplink to cancel the interference
from intra-cell users, while mitigating inter-cell interference via the
precoders and feedforward filters. Similarly, for the downlink we can
apply THP as described in the preceding section to precompensate
for the intra-cell interference while avoiding inter-cell interference.
%from inter-cell users can be mitigated through interference alignment by appropriately choosing the precoders and feedforward filters \cite{GomCad11IT,ZhuBer11ICASSP,SuhTse08Allerton}.
The training procedures are the same as described in Section \ref{sec:nonlinear},
%The only difference is that the received signals come from users in multiple cells,
where the users across cells synchronously transmit on the uplink
and the BTS's synchronously transmit on the downlink.
%the inter-cell interference is not cancelled when training the DFD filters
%and it is not pre-compensated in the downlink.

The algorithms and results in Sections \ref{sec:linearup}--\ref{s:training}
can be easily extended to the multiple-cell scenario.
%Here we omit the details and only present numerical results for nonlinear and linear filters.
Fig.~\ref{f:sum_rate} shows uplink sum rate versus SNR in a two-cell scenario
with three antennas at each BTS and two dual-antenna users in each cell.
All of the direct- and cross-channels are assumed to be of equal strength.
The necessary conditions for interference alignment presented
in \cite{ZhuBer11ICASSP} are satisfied with equality with these parameters.
In this scenario, the inter-cell interference in each cell can be aligned in
a one-dimensional subspace \cite{CadJaf08IT}, so that two degrees of freedom
can be achieved in each cell.

Fig.~8(a) compares sum rates for linear and nonlinear filters with
different numbers of iterations. In the high-SNR regime there are
substantial differences among the results with two, 100, and 1000 forward-backward iterations,
which indicates that it takes many more iterations for the precoders and filters
to converge than for the single-cell scenario.
That is, Fig.~5(a) shows that for the single-cell scenario,
the achievable sum rate with two iterations is comparable
to that with 100 iterations. In Fig.~8(b), the performance of
finite bi-directional training is presented with two forward-backward iterations.
For these results downlink training with nonlinear filters uses THP with QPSK.
With 20 training symbols per iteration, the performance is close to that
with MMSE updates (infinite training).
Nonlinear filters give a noticeable gain over linear filters
with either infinite or finite training over a wide range of SNRs.
%**I suggest deleting Fig 8(c) since I don't think it conveys much information
%relative to Fig 8b.**{\color{blue} ** Comments from Mingguang:  Fine.**}
%Similarly, in Fig.~8(c) we see that increasing the training symbols gives diminishing gains.
\begin{figure}
  \center
  \subfloat[]{
  \includegraphics[width=0.5\textwidth]{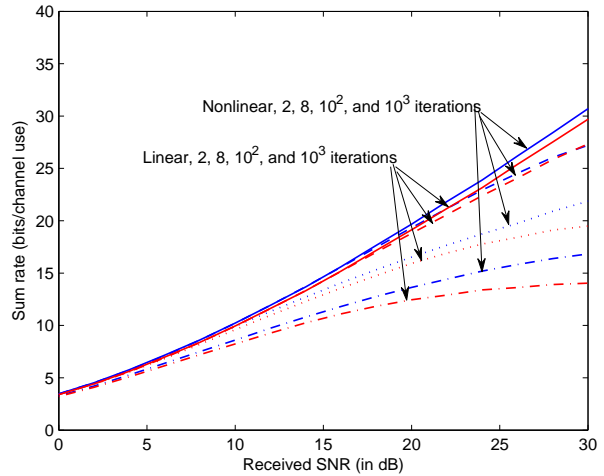}
  \label{f:sum_rate_2iters}
} \,
  \subfloat[]{
  \includegraphics[width=0.5\textwidth]{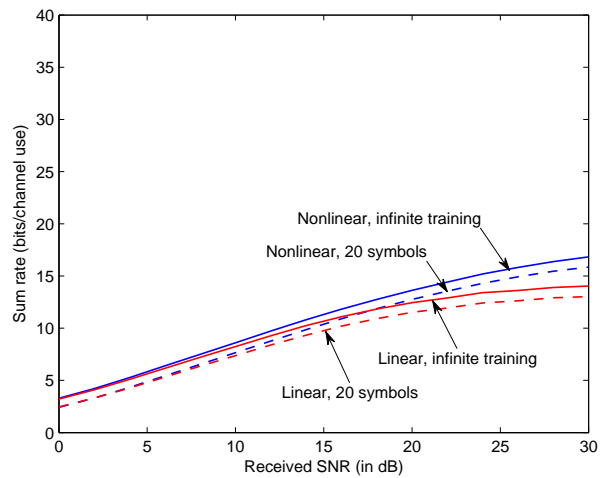}
  \label{f:sum_rate_2iters_20symbols}
} \,
%\end{figure}
%\begin{figure}[htdp]
%  \ContinuedFloat
%  \center
%  \subfloat[]{
%  \includegraphics[width=0.75\textwidth]{2iterations_training}
%  \label{f:sum_rate_2iters_training}
%}
\caption{\small Uplink sum rate per cell with two cells, $K\!=\!2$ users in each cell,
$\Nb\!=\!3$ antennas at each BTS, and $N_k\!=\!2$ antennas at each user.
There are two forward-backward iterations in (b).}
\label{f:sum_rate}
\end{figure}

Fig.~{\ref{f:sum_rate_3users}} shows the achievable uplink sum rate
with two cells, three users in each cell, and three antennas at each user.
Here the necessary conditions for interference alignment
in \cite{ZhuBer11ICASSP} are not satisfied, so that interference alignment
is not achievable. With linear filters the sum rate saturates as
the SNR increases irrespective of the number of forward-backward iterations.
In contrast, with nonlinear filters the sum rate increases with SNR
given enough forward-backward iterations (so that the filters can converge),
and the maximum of two degrees of freedom per cell can be achieved.
(As in the last example, other-cell interference occupies a subspace
of one dimension leaving two dimensions for the desired signals
\cite{SuhTse11TC}.)
%**did you verify that the slope in Figs 8 and 9 is two?**{\color{blue} ** Comments from Mingguang:  Yes, it is two.**}
%From the viewpoint of interference alignment, the maximum achievable degrees of freedom are two in each cell since there are three antennas at the BTS and interference from the other cell can occupy a subspace of one dimension
\begin{figure}
  \center
  \includegraphics[width=0.5\textwidth]{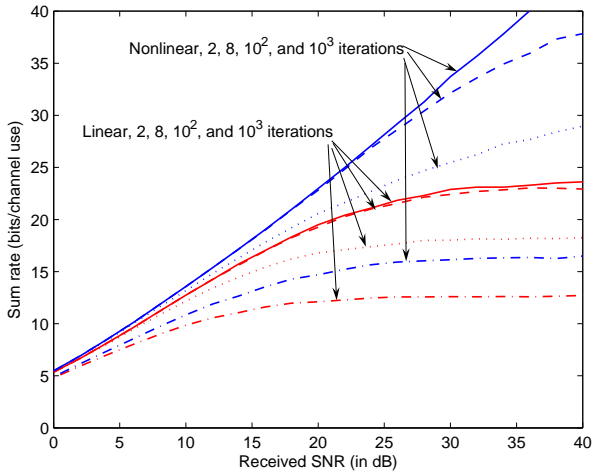}
\caption{\small Uplink sum rate per cell
%{\color{blue} ** Comments from Mingguang:  Yes, per cell.**}
with two cells,
$K\!=\!3$ users in each cell, $N\!=\!3$ antennas at each BTS, and three antennas at each user.}
\label{f:sum_rate_3users}
\end{figure}

\section{Conclusions}
A distributed iterative bi-directional training approach has been proposed for jointly optimizing
the filters in base transceiver stations and user equipments in cellular systems.
The approach allows for a nonlinear decision feedback canceller in the uplink
and a feedback precoder for downlink interference pre-cancellation.
In the absence of channel state information,
the filters can be directly estimated by synchronously transmitting uplink pilots
from all users and downlink pilots from all BTS's.
Downlink training may be nonlinearly precoded as well to enable
simultaneous training of all user filters (uplink precoders).
This adaptive scheme avoids the need to stagger pilots across cells,
and to exchange channel estimates across users and BTSs.

With nonlinear filters, it is observed that the uplink sum rate
achieved by bi-directional training in a single cell approaches
the sum capacity with increasing number of iterations.
Although there are more parameters to be estimated compared to linear filters,
with nonlinear filters finite training can significantly
outperform linear filters with a limited number of forward-backward iterations.
Furthermore, the nonlinear filters have been observed to achieve
the available degrees of freedom even when interference alignment
is infeasible with linear precoders and filters.
%-- check**{\color{blue} ** Comments from Mingguang:  Yes, it is fair to say so based on the results in Fig. 9.**}

Remaining open issues are how to trade off the number of iterations
and the amount of training per iteration, and how to determine the optimum number
of data-streams for each user if multiple data-streams are allowed.
The proof of convergence for the simultaneous approach to uplink/downlink
filter optimization in which the filters are normalized at each step
also remains open. Finally, an important issue in practice is how to combine
such joint beamforming schemes with the scheduling of users
across time-frequency resources.

\section*{Appendix: Proof of Proposition 1}
It is enough to show that given any set of precoders $\{\boldsymbol{v}_k\}$,
to achieve the minimum uplink sum MMSE each user should transmit with full power.
Denote $\boldsymbol{H}\!=\![\boldsymbol{H}_1\!\boldsymbol{v}_1,\ldots,\boldsymbol{H}_K\!\boldsymbol{v}_K]$, $\vec{\boldsymbol{x}}\!=\![\vec{x}_1,\ldots,\vec{x}_K]$, and $\boldsymbol{\hat{x}}\!=\![\hat{x}_1,\ldots,\hat{x}_K]$.
The covariance matrix for the MMSE estimation error
$\boldsymbol{\epsilon}\!=\!\hat{\boldsymbol{x}}\!-\!\vec{\boldsymbol{x}}$
is given by
\begin{equation}
\boldsymbol{C}_{\boldsymbol{\epsilon}}
%&=\mathbb{E}\big[(\hat{\boldsymbol{x}}-\vec{\boldsymbol{X}})(\boldsymbol{\hat{X}}-\vec{\boldsymbol{X}})^\dagger\big]\notag\\
=\boldsymbol{I}_K-\boldsymbol{H}^\dagger(\boldsymbol{H}\boldsymbol{H}^\dagger+\boldsymbol{I}_N)^{-1}\boldsymbol{H}.\label{MMSE}
\end{equation}
%where (\ref{MMSE}) is obtained by substituting $\hat{\boldsymbol{X}}\!=\!\boldsymbol{H}^\dagger(\boldsymbol{H}\boldsymbol{H}^\dagger\!+\!\boldsymbol{I})^{-1}\boldsymbol{H}\vec{\boldsymbol{X}}$.
Taking the trace of both sides gives the sum MSE
\begin{align}
\text{\large{trace}}\big(\boldsymbol{C}_{\boldsymbol{\epsilon}}\big)
%&=K-\text{\large{trace}}\big(\boldsymbol{H}^\dagger(\boldsymbol{H}\boldsymbol{H}^\dagger+\sigma^2\boldsymbol{I})^{-1}\boldsymbol{H}\big)\notag\\
&=K-\text{\large{trace}}\big(\boldsymbol{H}\boldsymbol{H}^\dagger(\boldsymbol{H}\boldsymbol{H}^\dagger+\sigma^2\boldsymbol{I}_N)^{-1}\big)\label{traceeq}\\
&=K-\sum_{i=1}^{N}\frac{\lambda_i}{\lambda_i+\sigma^2}\label{totalMSE}
\end{align}
where the $\lambda_i$'s are the eigenvalues of $\boldsymbol{H}\boldsymbol{H}^\dagger$.
Assume that the transmit power of user $k$ is scaled up by $\beta_k^2\!\geq\!1$,
which implies the norm of the $k$-th column of $\boldsymbol{H}$ is scaled by $\beta_k$.
Define the new channel matrix as $\tilde{\boldsymbol{H}}\!=\![\beta_1\!\boldsymbol{H}_1\!\boldsymbol{v}_1,\ldots,\beta_K\!\boldsymbol{H}_K\!\boldsymbol{v}_K]$.
Denote $\tilde{\lambda}_k$ as the $k$-th largest eigenvalue of
$\tilde{\boldsymbol{H}}\tilde{\boldsymbol{H}}^\dagger$.
Since
$\tilde{\boldsymbol{H}}\tilde{\boldsymbol{H}}^\dagger\!=
\!\boldsymbol{H}\boldsymbol{H}^\dagger\!+\!\sum_{k=1}^n
(\beta_k^2\!-\!1)\boldsymbol{H}_k\!\boldsymbol{v}_k\!\boldsymbol{v}_k^\dagger
\!\boldsymbol{H}_k^\dagger$ and the matrix
$\sum_{k=1}^n(\beta_k^2\!-\!1)\boldsymbol{H}_k
\!\boldsymbol{v}_k\!\boldsymbol{v}_k^\dagger\!\boldsymbol{H}_k^\dagger$
is positive semidefinite, by \cite[Corollary 4.3.3]{HorCha95},
we must have $\tilde{\lambda}_i\!\geq\!\lambda_i$ for each $i$.
Hence increasing the power for any user must decrease the sum MSE.
%$\lambda_k$'s increase, then from (\ref{totalMSE}) the total uplink MMSE decreases. Therefore, the power constraint at each user must be binding.

% that's all folks
\end{document}